
\relax
\documentclass[letterpaper]{article} 
\usepackage{aaai18}  
\usepackage{times}  
\usepackage{helvet}  
\usepackage{courier}  
\usepackage{url}  
\usepackage{graphicx}  
\frenchspacing  
\setlength{\pdfpagewidth}{8.5in}  
\setlength{\pdfpageheight}{11in}  

\usepackage{booktabs}
\usepackage{multirow}

\usepackage{microtype}

\usepackage{url}
\usepackage[usenames,dvipsnames]{color}
\usepackage{tikz}


\DeclareRobustCommand{\DE}[3]{#2}

\newcommand{\citet}[1]{\citeauthor{#1} (\citeyear{#1})}
\newcommand{\citep}[1]{(\citeauthor{#1}, \citeyear{#1})}
\newcommand{\citealt}[1]{\citeauthor{#1} \citeyear{#1}}

\usepackage{amsthm}
\usepackage{amssymb}

\newtheorem{theorem}{Theorem}
\newtheorem{theorem*}[theorem]{Theorem$^{\star}$}

\newtheorem{lemma*}[theorem]{Lemma$^{\star}$}
\newtheorem{proposition}[theorem]{Proposition}
\newtheorem{proposition*}[theorem]{Proposition$^{\star}$}
\newtheorem{observation}[theorem]{Observation}
\newtheorem{corollary}[theorem]{Corollary}
\newtheorem{corollary*}[theorem]{Corollary$^{\star}$}
\newtheorem{claim}[theorem]{Claim}

\usepackage{stmaryrd}
\usepackage{mathtools}

\newcommand{\SB}{\{\,}%
\newcommand{\SM}{\;{:}\;}%
\newcommand{\SE}{\,\}}%
\newcommand{\SBs}{\{}%
\newcommand{\SEs}{\}}%

\renewcommand{\P}{\text{\normalfont P}}
\newcommand{\NP}{\text{\normalfont NP}}
\newcommand{\PSPACE}{\text{\normalfont PSPACE}}
\newcommand{\EXPTIME}{\text{\normalfont EXPTIME}}

\newcommand{\FPT}{\text{\normalfont FPT}}
\newcommand{\XP}{\text{\normalfont XP}}
\newcommand{\co}{\text{\normalfont co-}}
\newcommand{\para}{\text{\normalfont para-}}
\newcommand{\W}[1]{\text{\normalfont W[#1]}}

\newcommand{\NN}{\mathbb{N}}

\newcommand{\AAA}{\mathcal{A}}

\newcommand{\BBB}{\mathcal{B}}
 
\newcommand{\EEE}{\mathcal{E}}
\newcommand{\LLL}{\mathcal{L}} 

\newcommand{\III}{\mathcal{I}}

 \newcommand{\TTT}{\mathcal{T}}

\newcommand{\Card}[1]{|#1|}

\newcommand{\mtext}[1]{\text{\normalfont #1}}

\newcommand{\SAT}{\ensuremath{\mtext{\sc SAT}}}
\newcommand{\TQBF}{\ensuremath{\mtext{\sc TQBF}}}

\newcommand{\Clique}{\mtext{\textsc{Clique}}}

\newcommand{\Var}[1]{\mtext{Var\ensuremath{(#1)}}}

\newcommand{\ALC}[0]{\ensuremath{\mathcal{ALC}}}
\newcommand{\ALE}[0]{\ensuremath{\mathcal{ALE}}}
\newcommand{\ALU}[0]{\ensuremath{\mathcal{ALU}}}
\newcommand{\AL}[0]{\ensuremath{\mathcal{AL}}}

\newcommand{\EL}[0]{\ensuremath{\mathcal{E\!L}}}
\newcommand{\ELI}[0]{\ensuremath{\mathcal{E\!LI}}}

\newenvironment{myquote}{\begin{center}
    \begin{minipage}{.98\linewidth}}{\end{minipage}\end{center}}


  \pdfinfo{
/Title (Algorithmically Tractable Fragments for Judgment Aggregation)
/Author (Ronald de Haan)}
\setcounter{secnumdepth}{0}  
 \begin{document}
%
\title{A Parameterized Complexity View on Description Logic Reasoning}
\author{Ronald de Haan\\
Institute for Logic, Language and Computation\\
University of Amsterdam\\
\url{me@ronalddehaan.eu}
}
\maketitle

\begin{abstract}
Description logics are knowledge representation languages that have been
designed to strike a balance between expressivity and
computational tractability.
Many different description logics have been developed,
and numerous computational problems for these logics have been studied
for their computational complexity.
However, essentially all complexity analyses of reasoning
problems for description logics use the one-dimensional framework
of classical complexity theory.
The multi-dimensional framework of parameterized complexity theory
is able to provide a much more detailed image of the complexity
of reasoning problems.

In this paper we argue that the framework of parameterized complexity
has a lot to offer for the complexity analysis of description logic reasoning
problems---when one takes a progressive and forward-looking view
on parameterized complexity tools.
We substantiate our argument by means of three case studies.
The first case study is about the problem of concept satisfiability
for the logic \ALC{} with respect to nearly acyclic TBoxes.
The second case study concerns concept satisfiability
for \ALC{} concepts parameterized by the number of occurrences
of union operators and the number of occurrences of
full existential quantification.
The third case study offers a critical look at data complexity results
from a parameterized complexity point of view.
These three case studies are representative for the wide range of
uses for parameterized complexity methods for description
logic problems.
\end{abstract}

\section{Introduction}


Description logics have been designed as
knowledge representation formalisms that have good computational
properties \cite{BaaderCalvaneseMcGuinnessNardiPatelSchneider03}.
Correspondingly, there has been a lot of research into the computational
complexity of reasoning problems for different description logics.
This research has, however, focused entirely on the framework of classical
complexity theory to study the computational complexity
(see, e.g.,~\citealt{BaaderCalvaneseMcGuinnessNardiPatelSchneider03};~%
\citealt{BaaderHorrocksSattler08}).

The more fine-grained and multi-dimensional
framework of parameterized complexity theory
has hardly been applied to study the complexity of reasoning problems
for description logics.
Only a few works used the framework of parameterized complexity
to study description logic problems
\cite{BienvenuKikotKontchakovPodolskiiRyzhikovZakharyaschev17,%
BienvenuKikotKontchakovRyzhikovZakharyaschev17,%
CeylanPenaloza14,KikotKontchakovZakharyaschev11,Motik12,%
SimancikMotikHorrocks14,SimancikMotikKroetzsch11}.
Moreover, these works all use the framework in a traditional way,
focusing purely on one commonly used notion of tractability
(namely that of fixed-parameter tractability).

Parameterized complexity is designed to address the downside of
classical complexity theory that it is largely ignorant of structural properties
of problem inputs that can potentially be exploited algorithmically.
It does so by distuinguishing a problem parameter~$k$, in addition to the
input size~$n$, and measuring running times in terms of both of these.
The parameter~$k$ can be used to measure various types of structure
that are present in the problem input.
Parameterized complexity theory has grown into a large
and thriving research community over the last few decades
(see, e.g.,~\citealt{BodlaenderDowneyFominMarx12};
\citealt{Downey12};
\citealt{DowneyFellows13}).
Most results and techniques in parameterized complexity
theory revolve around the notion of \emph{fixed-parameter tractability}---%
a relaxation of polynomial-time solvability
based on running times of the
form~$f(k) \cdot n^{O(1)}$, for some computable
function~$f$ (possibly exponential or worse).

Due to the fact that reasoning problems related to
description logics are typically of high complexity
(e.g., complete for classes like \PSPACE{} and \EXPTIME{}),
it is unsurprising that one would
need very restrictive parameters to obtain fixed-parameter tractability results
for such problems.
It has been proposed recently that the investigation of
problems that are of higher complexity
can also benefit from the parameterized complexity point of view
\cite{DeHaan16,DeHaanSzeider14b,DeHaanSzeider14,DeHaanSzeider16,%
DeHaanSzeider17}---using tools and methods
that overstep the traditional focus on fixed-parameter tractability as positive results.

In this paper, we show how the complexity study of description logic
problems can benefit from using the framework of parameterized complexity
and all the tools and methods that it offers.
We do so using three case studies:
(1)~parameterized results for
concept satisfiability for \ALC{} with respect to nearly acyclic TBoxes,
(2)~parameterized results for
concept satisfiability for fragments of \ALC{} that are close
to \ALE{}, \ALU{} and \AL{}, respectively, and
(3)~parameterized results
addressing the notion of data complexity
for instance checking and conjunctive query entailment for \ELI{}.
The complexity results that we obtain
are summarized in Tables~\ref{table:alc-concept-sat-tboxes},%
~\ref{table:alc-concept-sat} and~\ref{table:data-complexity}---%
at the end of the sections where we present the case studies.


%

\paragraph{Outline.}

We begin by giving an overview of the theory of parameterized complexity---%
including commonly used (and more traditional) concepts and tools,
as well as more progressive notions.
Then we present our three case studies in three separate sections,
before sketching directions for future research and concluding.
%

\section{Parameterized Complexity Theory}

We begin by introducing relevant concepts 
from the theory of parameterized complexity.
For more details, we refer to textbooks on the 
topic~\cite{DowneyFellows13,FlumGrohe06}.
We introduce both concepts that are used commonly in
parameterized complexity analyses in the literature
and less commonly used concepts, that play a role
in this paper.

\paragraph{FPT and XP.}
The core notion in parameterized complexity is that
of fixed-parameter tractability, which is a relaxation of the
traditional notion of polynomial-time solvability.
Fixed-parameter tractability is a property of parameterized problems.
A \emph{parameterized problem~$Q$} is a subset of~$\Sigma^{*} \times \NN$,
for some finite alphabet~$\Sigma$.
An instance of a parameterized problem is a pair~$(x,k)$
where~$x$ is the main part of the instance,
and~$k$ is the parameter.
Intuitively, the parameter captures some type of structure
of the instance that could
potentially be exploited algorithmically---%
the smaller the value of the parameter~$k$, the more structure
there is in the instance.
(When considering multiple parameters,
we take their sum as a single parameter.)
A parameterized problem is \emph{fixed-parameter tractable}
if instances~$(x,k)$ of the problem can be solved by a deterministic algorithm
that runs in time~$f(k)\Card{x}^{O(1)}$,
where~$f$ is a computable function of~$k$.
Algorithms running within such time bounds are called \emph{fpt-algorithms}.
\FPT{} denotes the class of all parameterized problems
that are fixed-parameter tractable.

Intuitively, the idea behind fixed-parameter tractability is that whenever the
parameter value~$k$ is small, the overall running time is reasonably small---%
assuming that the constant hidden behind~$O(1)$ is small.
In fact, for every fixed parameter value~$k$, the running time of an fpt-algorithm
is polynomial (where the order of the polynomial is constant).

A related parameterized complexity class is \XP{},
which consists of all parameterized problems for which instances~$(x,k)$
can be solved in time~$n^{f(k)}$, for some computable function~$f$.
Algorithms running within such time bounds are called \emph{xp-algorithms}.
That is, a parameterized problem~$Q$ is in \XP{} if there is an algorithm
that solves~$Q$ in polynomial time for each fixed value~$k$ of the parameter---%
where the order of the polynomial may grow with~$k$.
It holds that~$\FPT \subsetneq \XP{}$.
Intuitively, if a parameterized problem is in~$\XP{} \setminus \FPT{}$,
it is not likely to be efficiently solvable in practice.
Suppose, for example, that a problem is solvable in time~$n^{k}$
in the worst case.
Then already for~$n = 100$ and~$k = 10$, it could take ages to solve
this problem (see, e.g.,~\citealt{Downey12}).

\paragraph{Completeness Theory.}
Parameterized complexity also offers a \emph{completeness theory},
similar to the theory of \NP{}-completeness,
that provides a way to obtain evidence that
a parameterized problem is not fixed-parameter tractable.
Hardness for parameterized complexity classes is based on fpt-reductions,
which are many-one reductions where the parameter of one problem
maps into the parameter for the other.
More specifically, a parameterized problem~$Q$ is fpt-reducible to another
parameterized problem~$Q'$
if there is a mapping~$R$
that maps instances of~$Q$ to instances of~$Q'$ such that
(i)~$(I,k) \in Q$ if and only if~$R(I,k) = (I',k') \in Q'$,
(ii)~$k' \leq g(k)$ for a computable function~$g$, and
(iii)~$R$ can be computed in time~$f(k)\Card{I}^c$
for a computable function~$f$ and a constant~$c$.
A problem~$Q$ is \emph{hard} for a parameterized complexity
class~$\mtext{K}$ 
if every problem~$Q' \in \mtext{K}$ can be
fpt-reduced to~$Q$.
A problem~$Q$ is \emph{complete} for a parameterized
complexity class~$\mtext{K}$
if~$Q \in \mtext{K}$
and~$Q$ is $\mtext{K}$-hard.

Central to the completeness theory are the classes~$\W{1}
\subseteq \W{2} \subseteq \dotsc \subseteq \W{P} \subseteq \XP$
of the Weft hierarchy.
We will not define the classes~\W{\ensuremath{t}} in detail
(for details, see, e.g., \citealt{FlumGrohe06}).
It suffices to note that it is widely believed that~$\W{1} \neq \FPT$.%
\footnote{In fact, it holds that~$\W{1} \neq \FPT$, assuming that
$n$-variable 3SAT cannot be solved in subexponential time,
that is, in time~$2^{o(n)}$
\cite{ChenChorFellowsHuangJuedesKanjXia05,%
ChenKanj12,DowneyFellows13}.}
Thus, showing that a problem~$Q$ is \W{1}-hard gives evidence
that~$Q$ is not fpt-time solvable.

An example of a \W{1}-complete parameterized problem
is \Clique{} \cite{DowneyFellows95,DowneyFellows13}.
Instances for this problem consist of~$(G,k)$,
where~$G = (V,E)$ is an undirected graph, and~$k \in \NN$.
The parameter is~$k$, and the question is to decide
whether~$G$ contains a clique of size~$k$.

\paragraph{Para-K.}
For each classical complexity class~$\mtext{K}$,
we can construct a parameterized analogue~$\para{\mtext{K}}$
\cite{FlumGrohe03}.
Let~$\mtext{K}$ be a classical complexity class, e.g., \NP{}.
The parameterized complexity
class~$\para{\mtext{K}}$ is then defined as the class of all parameterized
problems~$L \subseteq \Sigma^{*} \times \NN{}$
for which there exist a computable function~$f : \NN{} \rightarrow \Sigma^{*}$
and a problem~$Q' \subseteq \Sigma^{*} \times \Sigma^{*}$
in~$K$, such
that for all instances~$(x,k) \in \Sigma^{*} \times \NN{}$
it holds that~$(x,k) \in Q$ if and only if~$(x,f(k)) \in Q'$.
Intuitively, the class~$\para{K}$ consists of all problems that are
in~$\mtext{K}$ after a precomputation that only involves the parameter.
A common example of such parameterized analogues of classical
complexity classes is the parameterized complexity class \para{\NP}.
Another example is~$\para{\P} = \FPT$.

If (the unparameterized variant of) a parameterized problem~$Q$
is in the class~$\mtext{K}$, then~$Q \in \para{\mtext{K}}$.
Also, if~$Q$ is already $\mtext{K}$-hard for a finite set of
parameter values, then~$Q$ is $\para{\mtext{K}}$-hard
\cite{FlumGrohe03}.

Using the classes \para{\mtext{K}} and the notion of fpt-reductions,
one can also provide evidence that certain parameterized problems
are not fixed-parameter tractable.
If a \para{\mtext{K}}-hard parameterized problem is fixed-parameter
tractable, then~$\mtext{K} = \P$.
For example, a \para{\NP}-hard parameterized problem is not
fixed-parameter tractable, unless~$\P = \NP$.

\paragraph{Para-NP and para-co-NP.}
The classes \para{\NP} and \para{\co{\NP}} are parameterized analogues
of the classes \NP{} and \co{\NP}.
The class \para{\NP} can alternatively be defined as the class of parameterized
problems that are solvable in fpt-time by a non-deterministic
algorithm~\cite{FlumGrohe03}.
Similarly, \para{\co{\NP}} can be defined using fpt-algorithms using
universal nondeterminism---%
i.e., nondeterministic fpt-algorithms that reject the input if at least
one sequence of nondeterministic choices leads the algorithm to reject.
It holds that~$\W{1} \subseteq \W{2} \subseteq \dotsm \subseteq \W{P}
\subseteq \para{\NP}$.

Another alternative definition of the class \para{\NP}---that can
be motivated by the amazing practical performance of SAT solving
algorithms (see, e.g.,~\citealt{BiereHeuleMaarenWalsh09})---%
is using the following parameterized variant of the
propositional satisfiability problem
\cite{DeHaan16,DeHaanSzeider14b,DeHaanSzeider14,%
DeHaanSzeider17}.
Let~$\SAT_{1} = \SB (\varphi,1) \SM \varphi \in \SAT \SE$ be the
problem SAT with a constant parameter~$k=1$.
The class \para{\NP} consists of all problems that can be fpt-reduced
to~$\SAT_{1}$.
In other words, \para{\NP} can be seen as the class of all parameterized
problems that can be solved by (1)~a fixed-parameter tractable encoding
into SAT, and (2)~using a SAT solving algorithm to then decide the problem.
The class \para{\co{\NP}} can be characterized in a similar way,
using UNSAT instead of SAT.
Consequently, problems in \para{\co{\NP}} can also be solved using the
combination of an fpt-time encoding and a SAT solving algorithm.

\paragraph{Para-PSPACE.}

The class \para{\PSPACE}
can alternatively be defined as the class
of all parameterized problems~$Q$
for which there exists a (deterministic or nondeterministic) algorithm
deciding whether~$(x,k) \in Q$
using space~$f(k) |x|^{O(1)}$, for some computable function~$f$.
It holds that~$\para{\NP} \cup \para{\co{\NP}} \subseteq \para{\PSPACE}$.

Another alternative characterization of \para{\PSPACE} is using
a parameterized variant of \TQBF{}---the problem of deciding
whether a given quantified Boolean formula is true.
Let~$\TQBF_{1} = \SB (\varphi,1) \SM \varphi \in \TQBF \SE$ be the
problem \TQBF{} with a constant parameter~$k=1$.
The class \para{\PSPACE} consists of all problems that can be fpt-reduced
to~$\TQBF_{1}$.
In other words, \para{\PSPACE} can be seen as the class of all parameterized
problems that can be solved by (1)~an fpt-time encoding
into TQBF, and (2)~using a TQBF solver to then decide the problem
(see, e.g.,~\citealt{BiereHeuleMaarenWalsh09}).

Yet another characterization of \para{\PSPACE} uses alternating Turing
machines (ATMs).
An ATM is a nondeterministic Turing machine where the states are partitioned
into existential and universal states (see, e.g.,~\citealt{FlumGrohe06}, Appendix~A.1).
A configuration of the ATM with an existential state is accepting if at least one
successor configuration is accepting, and a configuration with a universal state
is accepting if all successor configurations are accepting.
Intuitively, an ATM can alternate between existential and universal nondeterminism.
The class \para{\PSPACE} consists of all parameterized problems that can be
decided by an ATM in fixed-parameter tractable time.

\paragraph{Para-EXPTIME.}

The class \para{\EXPTIME}
can be defined as the class
of all parameterized problems~$Q$
for which there exists a deterministic algorithm deciding whether~$(x,k) \in Q$
in time~$f(k) 2^{|x|^{O(1)}}$, for some computable function~$f$.
It holds that~$\para{\PSPACE} \subseteq \para{\EXPTIME}$
and that~$\XP \subseteq \para{\EXPTIME}$.

For an overview of all parameterized complexity classes
that feature in this paper---and their relation---%
see Figure~\ref{fig:parameterized-landscape}.

\begin{figure}[htp!]
\begin{center}
\begin{tikzpicture}
  \node[] (fpt) at (0,-0.15) {$\FPT$};
  \node[] (w1) at (-1.25,0.5) {$\W{1}$};
  \node[] (cow1) at (1.25,0.5) {$\co{\W{1}}$};
  \node[] (paranp) at (-2.5,1.25) {$\para{\NP}$};
  \node[] (paraconp) at (2.5,1.25) {$\para{\co{\NP}}$};
  \node[] (xp) at (-1.25,1.25) {\phantom{p}\!\!\!\!$\XP{}$};
  \node[] (parapspace) at (0,2.2) {$\para{\PSPACE}$};
  \node[] (paraexptime) at (0,3.1) {$\para{\EXPTIME}$};
  \draw[->] (fpt) -- (w1);
  \draw[->] (fpt) -- (cow1);
  \draw[->] (w1) -- (paranp);
  \draw[->] (cow1) -- (paraconp);
  \draw[->] (paranp) -- (parapspace);
  \draw[->] (paraconp) -- (parapspace);
  \draw[->] (w1) -- (xp);
  \draw[->] (cow1) -- (xp);
  \draw[->] (xp) edge[bend left=35] (paraexptime);
  \draw[->] (parapspace) -- (paraexptime);
\end{tikzpicture}
\end{center}
\vspace{-10pt}
\caption{An overview of the landscape of parameterized
complexity classes that play a role in this paper.}
\label{fig:parameterized-landscape}
\end{figure}
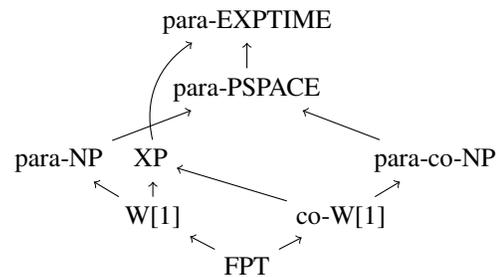

\section{Case Study 1: Concept Satisfiability for \ALC{} with respect to Nearly Acyclic TBoxes}
\label{sec:alc-almost-acyclic}

In this section, we provide our first case study to illustrate how
parameterized complexity can be used to obtain a more detailed
image of the computational complexity of description logic reasoning.
In particular, we consider the problem of concept satisfiability
for the description logic \ALC{} with respect to general TBoxes.
This problem is \EXPTIME{}-complete in general.
We consider two parameters for this problem.
One of these parameters does not help to reduce the complexity
of the problem---that is,
for this parameter the problem is \para{\EXPTIME}-complete.
The other of the two parameters does help to reduce the complexity
of the problem---that is,
for this parameter the problem is \para{\PSPACE}-complete.

We begin by revisiting the description logic \ALC{},
the problem of concept satisfiability with respect
to acyclic and general TBoxes,
and classical complexity results for this problem.
We then discuss our parameterized complexity results,
and how to interpret these results.

\subsection{The Description Logic \ALC{}}

Let~$N_C$,~$N_R$ and~$N_O$ be 
sets of
\emph{atomic concepts},
\emph{roles}, and \emph{individuals},
respectively. The triple~$(N_C,N_R,N_O)$ is called the \emph{signature}.
(We will often omit the signature if this is clear from the context.)

Concepts~$C$ are defined by the following grammar
in Backus-Naur form, for~$R \in N_R$ and~$A \in N_C$:
\[ C := A\ |\ \top\ |\ \bot\ |\ \neg C\ |\ C \sqcap C\ |\ C \sqcup C\ |\ \exists R. C\ |\ \forall R. C. \]

An \emph{interpretation~$\III = (\Delta^{\III},\cdot^{\III})$}
over a signature~$(N_C,N_R,N_O)$ consists of
a non-empty set~$\Delta^{\III}$ called the \emph{domain},
and an interpretation function~$\cdot^{\III}$ that maps
(1)~every individual~$a \in N_O$ to an
element~$a^{\III} \in \Delta^{\III}$,
(2)~every concept~$C$ to a subset of~$\Delta^{\III}$,
and (3)~every role~$R \in N_R$ to a subset
of~$\Delta^{\III} \times \Delta^{\III}$,
such that:\\[3pt]
\begin{minipage}{0.40\linewidth}
\begin{itemize}
  \item $\top^{\III} = \Delta^{\III}$; $\bot^{\III} = \emptyset$;
  \item $(\neg C)^{\III} = \Delta^{\III} \setminus C^{\III}$;
\end{itemize}
\end{minipage}
\begin{minipage}{0.58\linewidth}
\begin{itemize}
  \item $(C_1 \sqcap C_2)^{\III} = (C_1)^{\III} \cap (C_2)^{\III}$;
  \item $(C_1 \sqcup C_2)^{\III} = (C_1)^{\III} \cup (C_2)^{\III}$;
\end{itemize}
\end{minipage}
\begin{itemize}
  \item $(\exists R. C)^{\III} = \SB x \in \Delta^{\III} \SM$
    there exists some~$y \in C^{\III}$ such that $(x,y) \in R^{\III}$ $\SE$; and
  \item $(\forall R. C)^{\III} = \SB x \in \Delta^{\III} \SM$
    for each~$y$ such that $(x,y) \in R^{\III}$
    it holds that~$y \in C^{\III}$ $\SE$.
\end{itemize}

A \emph{general concept inclusion (GCI)} is a statement of the
form~$C \sqsubseteq D$, where~$C,D$ are concepts.
We write~$\III \models C \sqsubseteq D$
(and say that~$\III$ satisfies~$C \sqsubseteq D$)
if~$C^{\III} \subseteq D^{\III}$.
A \emph{(general) TBox~$\TTT$} is a finite set of GCIs.
A \emph{concept definition} is a statement of the
form~$A \equiv C$, where~$A \in N_C$ is an atomic concept,
and~$C$ is a concept.
We write~$\III \models A \equiv C$
(and say that~$\III$ satisfies~$A \equiv C$) if~$A^{\III} = C^{\III}$.
An \emph{acyclic TBox~$\TTT$} is a finite set of concept definitions
such that (1)~$\TTT$ does not contain two different concept
definitions~$A \equiv C_1$ and~$A \equiv C_2$ for any~$A \in N_C$,
and (2)~$\TTT$ contains no (direct or indirect) cyclic definitions---%
that is, the graph~$G_{\TTT}$ with vertex set~$N_C$ that contains
an edge~$(A,B)$ if and only if~$\TTT$ contains a concept
definition~$A \equiv C$ where~$B$ occurs in~$C$ is acyclic.
An interpretation~$\III$ satisfies a (general or acyclic) TBox~$\TTT$
if~$\III$ satisfies all GCIs or concept definitions in~$\TTT$.

A \emph{concept assertion} is a statement of the form~$C(a)$,
where~$a \in N_O$ and~$C$ is a concept.
A \emph{role assertion} is a statement of the form~$R(a,b)$,
where~$a,b \in N_O$ and~$R \in N_R$.
We write~$\III \models C(a)$ (and say that~$\III$ satisfies~$C(a)$)
if~$a^{\III} \in C^{\III}$.
Moreover, we write~$\III \models R(a,b)$ (and say that~$\III$ satisfies~$R(a,b)$)
if~$(a^{\III},b^{\III}) \in R^{\III}$.
An \emph{ABox~$\AAA$} is a finite set of concept and role assertions.

\subsection{Classical Complexity Results}

An important reasoning problem for description logics is the problem
of \emph{concept satisfiability}.
In this decision problem, the input consists of a concept~$C$
and a TBox~$\TTT$, and the question is whether~$C$ is satisfiable
with respect to~$\TTT$---%
that is, whether there exists an interpretation~$\III$ such that~$\III \models \TTT$
and~$C^{\III} \neq \emptyset$.
The problem of concept satisfiability is \PSPACE{}-complete, both for the case
where~$\TTT$ is empty and for the case where~$\TTT$ is an acyclic TBox.
For the case where~$\TTT$ is a general TBox, the problem
is \EXPTIME{}-complete.

\begin{proposition}[\citealt{DoniniMassacci00};~\citealt{Schild91}]
\label{prop:alc-exptime}
Concept satisfiability for the logic \ALC{} with respect to general TBoxes
is \EXPTIME{}-complete.
\end{proposition}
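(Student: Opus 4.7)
The plan is to establish both the upper and lower bounds separately, following the classical strategy behind Schild's and Donini--Massacci's results.

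For the upper bound, I would use a type elimination procedure. Let $\Sub{C,\TTT}$ be the set of subconcepts occurring in $C$ and the GCIs of $\TTT$, closed under single negation. A \emph{type} $\tau$ is a subset of $\Sub{C,\TTT}$ that is propositionally consistent (for each $D \in \Sub{C,\TTT}$, exactly one of $D, \neg D$ lies in $\tau$; $D \sqcap E \in \tau$ iff $D, E \in \tau$; and dually for $\sqcup$) and respects $\TTT$ globally (for each $D \sqsubseteq E \in \TTT$, if $D \in \tau$ then $E \in \tau$, or equivalently the ``universal concept'' $C_{\TTT} = \sqcap_{D \sqsubseteq E \in \TTT}(\neg D \sqcup E)$ belongs to $\tau$). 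There are at most $2^{|C|+|\TTT|}$ such types, and they can be enumerated in exponential time. Starting from the set $S_0$ of all types, iteratively remove any type $\tau$ such that some $\exists R. D \in \tau$ has no \emph{witness} type $\tau' \in S_i$ with $D \in \tau'$ and $E \in \tau'$ for every $\forall R. E \in \tau$. The sequence $S_0 \supseteq S_1 \supseteq \dotsm$ stabilises within exponentially many rounds at a fixed point $S^{*}$, and $C$ is satisfiable with respect to $\TTT$ iff some type in $S^{*}$ contains $C$. Each round scans pairs of types in time polynomial in $|S_i|$, so the whole procedure runs in $2^{O(|C|+|\TTT|)}$.

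For the lower bound, I would reduce from the acceptance problem of alternating Turing machines with polynomially bounded space, which is \EXPTIME{}-complete since alternating \PSPACE{} equals deterministic \EXPTIME{}. Given an ATM $M$ with space bound $p(n)$ and input $x$ of length $n$, I would construct in polynomial time a concept $C_{\mtext{init}}$ and a general TBox $\TTT_M$ whose size is polynomial in $n$. Domain elements of a model stand for configurations of $M$ on $x$; atomic concepts $T_{i,a}$, $H_i$, $Q_q$ encode respectively the symbol in cell $i$, the head position, and the current state (with $O(p(n))$ such concepts). A single role $R$ links a configuration to its possible successors. The GCIs in $\TTT_M$ (i)~enforce well-formedness of configuration encodings at every element, (ii)~enforce the local transition relation between any element and its $R$-successors using $\forall R$-style implications that only leave the legal next configurations available, and (iii)~define an auxiliary concept $\mtext{Acc}$ via $\mtext{Acc} \equiv (Q_{\exists} \sqcap \exists R.\mtext{Acc}) \sqcup (Q_{\forall} \sqcap \forall R.\mtext{Acc}) \sqcup Q_{\mtext{accept}}$, split into two inclusions, so that in every model $\mtext{Acc}$ holds at a configuration exactly when that configuration is accepting for $M$. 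Taking $C_{\mtext{init}}$ to be the conjunction describing the initial configuration together with $\mtext{Acc}$, one gets that $C_{\mtext{init}}$ is satisfiable with respect to $\TTT_M$ iff $M$ accepts $x$.

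The main obstacle is in the reduction: making the $\mtext{Acc}$ definition behave correctly requires that every model of $\TTT_M$ reproduces the least fixed point of the acceptance operator at all reachable configurations, which is exactly what distinguishes general TBoxes from acyclic ones---and is ultimately the source of the jump from \PSPACE{} to \EXPTIME{}. Aligning the existential/universal state alternation of the ATM with the $\exists R$/$\forall R$ constructors, while keeping cell contents, head position, and transition legality encoded in polynomial size, is the delicate bookkeeping step; by contrast, the type-elimination upper bound is essentially mechanical once the closure and witnessing conditions are set up correctly.
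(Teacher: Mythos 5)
The paper itself offers no proof of this proposition --- it is imported from the literature with citations to Schild (1991) and Donini and Massacci (2000) --- so there is no in-paper argument to compare yours against. Your two-part strategy is the standard one for this result: a Pratt-style type-elimination procedure for the \EXPTIME{} upper bound, and a simulation of polynomially space-bounded alternating Turing machines (using $\mtext{APSPACE} = \EXPTIME{}$) for the lower bound. The upper-bound half is correct as sketched; the closure conditions, the witness condition for $\exists R.D$, and the exponential bound on the number of types and elimination rounds are all as they should be.

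One point in the lower bound needs more care than you give it. The claim that in \emph{every} model the concept $\mtext{Acc}$ holds at a configuration exactly when that configuration is accepting is false for an arbitrary ATM: the two inclusions only force $\mtext{Acc}$ to be interpreted as \emph{a} fixed point of the acceptance operator, not the least one. On an infinite, non-accepting computation path a model may label every configuration with $\mtext{Acc}$ (each existential configuration then has an $\mtext{Acc}$-successor, namely the next configuration on the path), so a non-accepting input could still yield a satisfiable concept and the backward direction of your correctness claim breaks. The standard repair is to assume, without loss of generality, that the ATM halts on every computation path --- achievable by adding a step counter of $O(p(n))$ bits that bounds runs by the number of configurations, $2^{O(p(n))}$ --- after which every fixed point of the acceptance operator coincides with the least one, by induction on the distance to a halting configuration. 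With that assumption made explicit (together with the bookkeeping you already flag: universal configurations must have all legal successors realized via $\exists R$-conjuncts, and all $R$-successors must encode legal successor configurations via $\forall R$-constraints), your reduction goes through.
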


\begin{proposition}[\citealt{BaaderLutzMilicicSattlerWolter05};~\citealt{SchmidtSchaussSmolka91}]
\label{prop:alc-pspace-acyclic}
Concept satisfiability for the logic \ALC{} with respect to acyclic TBoxes
is \PSPACE{}-complete.
\end{proposition}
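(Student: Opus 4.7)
The lower bound is immediate: the empty set is vacuously an acyclic TBox, and concept satisfiability for $\ALC$ with respect to the empty TBox is already $\PSPACE$-hard~\cite{SchmidtSchaussSmolka91}. Hence it remains to establish membership in $\PSPACE$.

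The plan for the upper bound is to extend the trace-based tableau algorithm for $\ALC$ to cope with an acyclic TBox without a space blowup. Given a concept $C$ and an acyclic TBox $\TTT = \{ A_1 \equiv C_1, \ldots, A_m \equiv C_m \}$, the algorithm attempts to construct a tree-shaped model in a depth-first manner, maintaining at each node a label consisting of concepts that must be satisfied there, and applying the usual tableau rules for $\sqcap$, $\sqcup$, $\exists$ and $\forall$. The trace technique reuses memory: after fully exploring one $\exists$-successor of a node, it is discarded before the next $\exists$-successor is processed, so at any time only a single root-to-leaf branch of the tableau tree resides in memory.

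The new ingredient for acyclic TBoxes is \emph{lazy unfolding}: whenever a defined name $A_i$ appears in a node label, it may be replaced by $C_i$. Rather than unfolding the entire input concept up front---which can cause an exponential blowup in textual size when definitions are repeatedly nested---one observes that every concept that can ever appear in a label is a subconcept of $C$ or of some $C_i$, and therefore can be stored with polynomially many bits. Because $\TTT$ is acyclic, the unfolding process is well-founded: each step descends strictly in the definition graph $G_{\TTT}$, so no defined name is unfolded more than once along a single chain of nested definitions.

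The main technical obstacle is to bound the depth of the tableau tree polynomially. The modal depth of the (conceptually) fully unfolded concept is at most $d(C) + m \cdot d_{\max}(\TTT)$, where $d(\cdot)$ denotes modal depth and $d_{\max}(\TTT)$ is the maximum modal depth of a right-hand side in $\TTT$; acyclicity of $G_{\TTT}$ ensures that no defined name recurs along a path, which keeps this quantity polynomial. Combined with the polynomial bound on the size of each label, the trace technique yields a nondeterministic polynomial-space decision procedure, hence a $\PSPACE$ algorithm by Savitch's theorem.
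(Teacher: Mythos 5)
The paper does not prove this proposition---it is imported verbatim from the cited literature---and your sketch correctly reconstructs the standard argument from those sources: hardness is inherited from the empty-TBox case, and membership follows from the trace-based tableau with lazy unfolding, whose branch depth and label sizes remain polynomial precisely because the definition graph $G_{\TTT}$ is acyclic. The only detail you elide is the handling of negative occurrences of defined names (for a definition $A \equiv C$, an occurrence of $\neg A$ in a label must be unfolded to the negation normal form of $\neg C$), which enlarges the set of relevant concepts only polynomially and so does not affect the bounds.
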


\subsection{Parameterized Complexity Results}

We consider a parameterized variant of the problem of concept satisfiability
for \ALC{} where the parameter captures the distance towards acyclicity for
the given TBox.
That is, for this parameterized problem, the input consists of a concept~$C$,
an acyclic TBox~$\TTT_1$, and a general TBox~$\TTT_2$.
The parameter is~$k = |\TTT_2|$, and the question is whether~$C$ is
satisfiable with respect to~$\TTT_1 \cup \TTT_2$---%
that is, whether there exists an interpretation~$\III$ such that~$\III \models \TTT_1$,~%
$\III \models \TTT_2$, and~$C^{\III} \neq \emptyset$.

Parameterizing by the size of~$\TTT_2$ does not offer an improvement
in the complexity of the problem---that is, this parameter leads to
\para{\EXPTIME}-completeness.

\begin{theorem}
\label{thm:alc-para-exptime}
Concept satisfiability for \ALC{} with respect to both an acyclic TBox~$\TTT_1$
and a general TBox~$\TTT_2$ is \para{\EXPTIME}-complete
when parameterized by~$|\TTT_2|$. 
\end{theorem}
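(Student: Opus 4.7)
The plan is to establish membership and hardness separately, each using one of the general facts about $\para{\mtext{K}}$ recalled in the preliminaries. For membership, I would simply observe that $\TTT_1 \cup \TTT_2$ is itself a general TBox, so an instance of our parameterized problem can be decided by invoking any \EXPTIME{} decision procedure for concept satisfiability for \ALC{} with respect to general TBoxes, as guaranteed by Proposition~\ref{prop:alc-exptime}. Since the unparameterized problem lies in \EXPTIME{}, the parameterized problem lies in \para{\EXPTIME}.

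For hardness I would appeal to the remark that a problem which is already $\mtext{K}$-hard for a finite set of parameter values is $\para{\mtext{K}}$-hard, so it suffices to exhibit \EXPTIME{}-hardness at some fixed value of~$|\TTT_2|$. The key trick is to push the bulk of an input general TBox into an acyclic TBox, using a single fresh concept name as an abbreviation. Concretely, given an instance $(C, \TTT)$ of the \EXPTIME{}-hard general-TBox concept satisfiability problem with $\TTT = \{C_1 \sqsubseteq D_1, \ldots, C_n \sqsubseteq D_n\}$, I would introduce a fresh atomic concept $A_E$ and set
\[ \TTT_1 = \SB A_E \equiv (\neg C_1 \sqcup D_1) \sqcap \cdots \sqcap (\neg C_n \sqcup D_n) \SE, \quad \TTT_2 = \SB \top \sqsubseteq A_E \SE, \]
and output $(C, \TTT_1, \TTT_2)$; note that $|\TTT_2|$ is a fixed constant independent of the input.

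Three points then need to be checked. First, $\TTT_1$ is acyclic: the graph $G_{\TTT_1}$ has edges only from $A_E$ to atomic concepts appearing in some $C_i$ or $D_i$, none of which have a defining equation in $\TTT_1$ (in particular $A_E$ is fresh and thus does not occur on the right-hand side), so no cycle can arise. Second, the reduction is correct: an interpretation satisfies $\TTT_1 \cup \TTT_2$ iff every domain element lies in $A_E$, which by the definition of $A_E$ is equivalent to satisfying all GCIs of $\TTT$; hence $C$ is satisfiable with respect to $\TTT$ iff $C$ is satisfiable with respect to $\TTT_1 \cup \TTT_2$. Third, the reduction is clearly computable in polynomial time. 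I do not anticipate any real obstacle beyond these routine checks; the conceptually interesting point is simply that the acyclic-TBox mechanism already suffices to store arbitrarily many GCIs as abbreviations, so that a single constant-size general GCI acting as a trigger already recovers full \EXPTIME{}-hardness.
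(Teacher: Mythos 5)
Your proposal is correct and follows essentially the same route as the paper's own proof: membership via the \EXPTIME{} upper bound for the unparameterized problem, and hardness by absorbing an arbitrary general TBox~$\TTT$ into a single acyclic concept definition $A \equiv \bigsqcap_i(\neg C_i \sqcup D_i)$ triggered by the constant-size GCI $\top \sqsubseteq A$. Your explicit verification of acyclicity of~$\TTT_1$ and of the semantic equivalence is a welcome elaboration of steps the paper leaves as ``straightforward to verify.''
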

\begin{proof}
Membership in \para{\EXPTIME} follows from the fact that the unparameterized
version of the problem is in \EXPTIME{}.
To show \para{\EXPTIME}-hardness, it suffices to show that the problem is already
\EXPTIME{}-hard for a constant value of the parameter \cite{FlumGrohe03}.
We do so by giving a reduction from the problem of concept satisfiability
for \ALC{} with respect to general TBoxes.

Let~$C$ be a concept and let~$\TTT$ be a general TBox.
Moreover, let~$\TTT = \SBs C_1 \sqsubseteq D_1,\dotsc,C_m \sqsubseteq D_m \SEs$.
We construct an acyclic TBox~$\TTT_1$
and a general TBox~$\TTT_2$ such that~$C$ is satisfiable with respect to~$\TTT$
if and only if it is satisfiable with respect to~$\TTT_1 \cup \TTT_2$.
Let~$A$ be a fresh atomic concept.
We let~$\TTT_1 = \SBs A \equiv \bigsqcap_{i=1}^{m} (\neg C_i \sqcup D_i) \SEs$,
and we let~$\TTT_2 = \SBs \top \sqsubseteq A \SEs$.
It is straightforward to verify that~$C$ is satisfiable with respect to~$\TTT$
if and only if it is satisfiable with respect to~$\TTT_1 \cup \TTT_2$.
Moreover,~$|\TTT_2|$ is constant.
From this, we can conclude that the problem is \para{\EXPTIME}-hard.
\end{proof}

Intuitively, restricting only the number (and size) of the general TBox~$\TTT_2$
does not restrict the problem, as we can encode a general TBox of arbitrary
size in the acyclic TBox (together with a small general TBox).
If we restrict the number of concepts impacted by the general TBox, however,
we do get an improvement in the complexity of the problem.

Let~$\TTT_1$ be an acyclic TBox and let~$\TTT_2$ be a general TBox.
We define the set of \emph{concepts impacted by~$\TTT_2$ (w.r.t.~$\TTT_1$)}
as the smallest set~$I$ of concepts that is closed under (syntactic)
subconcepts and that satisfies that
(A)~whenever~$C \sqsubseteq D \in \TTT_2$,
then~$C,D \in I$, and
(B)~whenever~$A \in I$ and~$A \equiv C \in \TTT_1$,
then~$C \in I$.
%
If we parameterize the problem of concept satisfiability with respect to both
an acyclic TBox~$\TTT_1$ and a general TBox~$\TTT_2$ by the number
of concepts impacted by~$\TTT_2$, the complexity of the problem jumps
down to \para{\PSPACE}.

\begin{theorem}
\label{thm:alc-para-pspace}
Concept satisfiability for \ALC{} with respect to both an acyclic TBox~$\TTT_1$
and a general TBox~$\TTT_2$ is \para{\PSPACE}-complete when
parameterized by the number~$k$ of concepts
that are impacted by~$\TTT_2$ (w.r.t.~$\TTT_1$).
\end{theorem}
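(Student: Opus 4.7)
I would argue hardness and membership separately, with the upper bound being the substantive part.

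For \para{\PSPACE}-hardness, I would take $\TTT_2 = \emptyset$, so that the set $I$ of impacted concepts is empty and the parameter value is $k = 0$. In that case the problem collapses to concept satisfiability for \ALC{} with respect to the acyclic TBox~$\TTT_1$ alone, which is \PSPACE{}-hard by Proposition~\ref{prop:alc-pspace-acyclic}. Since \PSPACE{}-hardness at a constant parameter value already implies \para{\PSPACE}-hardness (as recalled in the discussion of \para{\mtext{K}}), the lower bound follows at once.

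For membership, the plan is to give an algorithm running in space $f(k) \cdot |x|^{O(1)}$, exploiting the fact that $\TTT_2$ only ``touches'' the $k$ concepts in $I$. In an fpt preprocessing phase of time and space $2^{O(k)} \cdot |x|^{O(1)}$, I would compute the set $\mathcal{T} \subseteq 2^{I}$ of \emph{realizable types} by a standard type-elimination procedure: initialize $\mathcal{T}$ with all $\tau \subseteq I$ that are propositionally consistent, that respect every definition $A \equiv C \in \TTT_1$ with $A \in I$, and that satisfy every GCI in $\TTT_2$; then iteratively discard any $\tau$ containing some $\exists R.C \in \tau$ for which no surviving $\tau' \in \mathcal{T}$ contains $C$ together with every $D$ such that $\forall R.D \in \tau$. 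The closure conditions in the definition of $I$ guarantee that all of these side conditions stay inside $I$.

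With $\mathcal{T}$ available, I would adapt the trace-based \PSPACE{} tableau algorithm of Schmidt-Schauss and Smolka for \ALC{} with acyclic TBoxes: it keeps only one root-to-leaf branch in memory, applies the standard expansion rules for $\sqcap$, $\sqcup$, $\exists R$ and $\forall R$, and lazily unfolds the definitions in $\TTT_1$. The modification is that whenever a new element is spawned, the algorithm nondeterministically picks a type $\tau \in \mathcal{T}$ and adds the concepts in $\tau$, together with the negations of the concepts in $I \setminus \tau$, to that element's label; when creating an $\exists R$-successor, it ensures the successor's type is compatible with the $\forall R$-requirements inherited from the parent. A blocking rule fires as soon as the chosen type of the current element repeats one already seen on the branch, bounding the branch length by $|\mathcal{T}| + |x| \leq 2^{k} + |x|$; since each element uses $O(|x| + k)$ space, the total space is $f(k) \cdot |x|^{O(1)}$, placing the problem in \para{\PSPACE}. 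The main obstacle is the soundness of this blocking-augmented tableau: one must argue that any accepting trace with consistent type annotations can be ``folded'' into a genuine model of $\TTT_1 \cup \TTT_2$ by redirecting blocked successors to their type-matching ancestors, relying on the closure conditions built into $\mathcal{T}$ to witness every existential requirement on the way. The remaining ingredients---the type-elimination preprocessing, the standard expansion rules, and the space accounting---are routine adaptations of well-known techniques.
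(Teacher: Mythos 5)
Your hardness argument is exactly the paper's: with $\TTT_2=\emptyset$ the parameter is~$0$ and Proposition~\ref{prop:alc-pspace-acyclic} applies, so \para{\PSPACE}-hardness follows from hardness at a fixed parameter value. For membership the paper is more direct than you are: it runs the ordinary \ALC{} tableau with subset blocking on the \emph{full} node labels and observes that (i)~there is a polynomial~$p$ such that every node at depth greater than~$p(n)$ is labelled only with concepts impacted by~$\TTT_2$, hence (ii)~beyond that depth there are at most~$2^k$ possible labels, so blocking bounds every branch by~$2^k\cdot p(n)$ and the whole thing runs in space~$2^k\cdot n^{O(1)}$. Your type-elimination preprocessing over~$2^{I}$ is dispensable for the space bound (the tableau itself already explores the successors), though it does no harm; the substantive content of both proofs is the same branch-length bound of the form $\mathrm{poly}(n)+2^{k}$.

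Two points in your membership argument need repair. First, your blocking condition fires as soon as the \emph{$I$-type} of the current element repeats an earlier one, but a node's label also contains non-impacted concepts (unfoldings of the input concept~$C$ and of definitions in~$\TTT_1$ outside~$I$). If blocking fires on a shallow node~$x$ whose label still contains, say, $\exists R.B$ with $B\notin I$, then folding~$x$ back onto a type-matching ancestor~$y$ need not witness that existential, since~$y$ need not carry~$\exists R.B$; the resulting structure is not a model and soundness fails. You must either block on containment of the full label (as the paper does) or only permit type-blocking once the label is a subset of~$I$. Second, the fact that makes either fix work --- that non-impacted concepts can persist only for polynomially many steps down a branch, because they originate from~$C$ and from the acyclic unfolding of~$\TTT_1$ and are not regenerated by the GCIs --- is used implicitly in your bound $|\mathcal{T}|+|x|$ on the branch length, but you never state or justify it. This is the linchpin of the whole upper bound (it is what confines deep labels to the $2^{k}$ subsets of~$I$), so it cannot be left tacit; with it made explicit and the blocking condition adjusted as above, your argument goes through and lands on essentially the paper's space analysis.
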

\begin{proof}
Hardness for \para{\PSPACE} follows directly from the fact that the problem
is already \PSPACE{}-hard when~$\TTT_2$ is empty
(Proposition~\ref{prop:alc-pspace-acyclic})---and thus the number of concepts
impacted by~$\TTT_2$ is~$0$.
We show membership in \para{\PSPACE} by exhibiting a
nondeterministic algorithm to solve the problem that runs in
space~$f(k) \cdot n^{O(1)}$, for some computable function~$f$.
Let~$C$ be a concept, let~$\TTT_1$ be an acyclic TBox,
and let~$\TTT_2$ be a general TBox.
%
We may assume without loss of generality that all concepts
occurring in~$\TTT_1$ are in negation normal form---%
that is, negations occur only directly in front of atomic concepts.
If this were not the case, we could straightforwardly
transform~$\TTT_1$ to a TBox that does have this property
in polynomial time, by introducing new atomic concepts~$A'$
for any negated concept~$\neg A$.

The algorithm that we use is the usual tableau
algorithm (with static blocking) for \ALC{}---see, e.g.,~\cite{BaaderSattler01}.
That is, it aims to construct a tree that can be used to construct
an interpretation satisfying~$C$,~$\TTT_1$ and~$\TTT_2$.
For each node in the tree, it first exhaustively applies the rules
for the~$\sqcup$ and~$\sqcap$ operators,
the rules for the concept definitions in~$\TTT_1$,
and the rules for the GCIs in~$\TTT_2$
(for each~$C \sqsubseteq D \in \TTT_2$ adding the
concept~$\neg C \sqsubseteq D$ to a node),
before applying the rules for the~$\exists$ and~$\forall$ operators.
Moreover, it applies all rules exhaustively to one node of the tree
before moving to another node.
Additionally, the algorithm uses the following usual
blocking condition (subset blocking):
the rule for the~$\exists$ operator cannot be
applied to a node~$x$ that has a predecessor~$y$ in the tree
that is labelled with all concepts that~$x$ is labelled with
(and possibly more).
It is straightforward to verify that this tableau algorithm correctly
decides the problem.

We argue that this algorithm requires space~$2^{k} \cdot n^{O(1)}$,
where~$k$ is the number of concepts impacted by~$\TTT_2$
and~$n$ denotes the input size.
%
%
It is straightforward to verify that there is a polynomial~$p$ such
that each node in the tree constructed by
the tableau algorithm that is more than~$p(n)$ steps
away from the root of the tree is only labelled with concepts that
are impacted by~$\TTT_2$.
Since there are only~$k$ concepts that are impacted by~$\TTT_2$,
we know that in each branch of the tree, the blocking condition
applies at depth at most~$2^k \cdot p(n)$, and thus that each
branch is of length at most~$2^k \cdot p(n)$.
From this, it follows that this algorithm requires
space~$2^k \cdot n^{O(1)}$,
and thus that the problem is in \para{\PSPACE}.
\end{proof}

\subsection{Interpretation of the Results}

The results in this section are summarized
in Table~\ref{table:alc-concept-sat-tboxes}.
The parameterized results of Theorems~\ref{thm:alc-para-exptime}
and~\ref{thm:alc-para-pspace} show that parameterized complexity theory
can make a distinction between the complexity
of the two variants of the problem that classical complexity
theory is blind to.
Classically, both variants are \EXPTIME{}-complete,
but one parameter can be used to get a polynomial-space
algorithm (when an additional~$2^k$ factor for the parameter~$k$),
whereas the other parameter requires exponential space,
no matter what additional~$f(k)$ factor is allowed.
The \para{\PSPACE} result 
of Theorem~\ref{thm:alc-para-pspace} also
yields an algorithm
solving the problem using (1)~an fpt-time encoding into the
problem TQBF, and then (2)~using a TQBF solver
to decide the problem
(see, e.g.,~\citealt{BiereHeuleMaarenWalsh09}).

\begin{table}[!htb]
  \centering
  
  \begin{small}
  \begin{tabular}{@{\ \ } p{3.0cm} @{\quad}|@{\quad}p{4.3cm} @{\ \ } } \toprule
      \multirow{2}{*}{\textit{parameter}} & \textit{complexity of \ALC{} concept} \\
      & \textit{satisfiability w.r.t.~$\TTT_1$ and~$\TTT_2$} \\
      \midrule
      \hspace{0.5pt}-- & \EXPTIME{}-c \hfill (Proposition~\ref{prop:alc-exptime}) \\[3pt]
      $|\TTT_2|$ & \para{\EXPTIME}-c \hfill (Theorem~\ref{thm:alc-para-exptime}) \\[3pt]
      \# of concepts impacted\newline by~$\TTT_2$ (w.r.t.~$\TTT_1$) & \para{\PSPACE}-c \hfill (Theorem~\ref{thm:alc-para-pspace}) \\[10pt]
    \bottomrule
  \end{tabular}
  \end{small}
  
  \caption{The parameterized complexity of \ALC{} concept satisfiability
    w.r.t. both an acyclic TBox~$\TTT_1$ and
    a general TBox~$\TTT_2$, for different parameters.}
  \label{table:alc-concept-sat-tboxes}
\end{table}

\section{Case Study 2: Concept Satisfiability for \ALC{}, \ALE{}, \ALU{} and \AL{}}

In this section, we provide our second case study to illustrate how
parameterized complexity can be used to obtain a more detailed
image of the computational complexity of description logic reasoning.
In particular, we consider the problem of concept satisfiability
for the description logic \ALC{}.
This problem is \PSPACE{}-complete in general.
We consider several parameters that measure the distance
to the logics \ALE{} and \ALU{}.
The logics \ALE{} and \ALU{} are obtained from \ALC{} by
disallowing concept union and full existential qualification,
respectively.
The parameters that we consider both help to reduce the complexity
of the problem.
One parameter renders the problem \para{\co{\NP}}-complete.
The other parameter renders the problem \para{\NP}-complete.
The combination of both parameters renders the problem
fixed-parameter tractable.

We begin by revisiting the description logics \ALE{} and \ALU{}
(and their intersection \AL{}),
and classical complexity results for the problem of concept
satisfiability for these logics.
We then discuss our parameterized complexity results,
and how to interpret these results.

\subsection{The Description Logics \ALE{}, \ALU{} and \AL{}}

In order to obtain the description logics \ALE{}, \ALU{} and \AL{},
we consider a (syntactic) variant of the logic \ALC{} where all concepts
are in \emph{negation normal form}.
That is, negations only occur immediately followed by
atomic concepts.
Put differently, we consider concepts~$C$ that are 
defined
as follows, for~$R \in N_R$ and~$A \in N_C$:
\[ C := A\ |\ \neg A\ |\ \top\ |\ \bot\ |\ C \sqcap C\ |\ C \sqcup C\ |\ \exists R. C\ |\ \forall R. C. \]
One can transform any \ALC{} concept into
negation normal form in linear time
(see, e.g.,~\citealt{BaaderHorrocksSattler08}).
The semantics of this variant of \ALC{} is defined exactly as
described in the previous section.
Throughout this section, we will only consider this variant of \ALC{}.

The description logic \ALE{} is obtained from the logic \ALC{}
by forbidding any occurrence of the operator~$\sqcup$.
%
The description logic \ALU{} is obtained from the logic \ALC{}
by requiring that for every occurrence~$\exists R. C$ of the
existential quantifier it holds that~$C = \top$; that is,
only limited existential quantification~$\exists R. \top$ is allowed.
%
The description logic \AL{} contains those concepts
that are concepts in both \ALE{} and \ALU{}---that is,
\AL{} is the intersection of \ALE{} and \ALU{}.

Thus, the logics \ALE{} and \ALU{} are obtained from \ALC{}
by means of two orthogonal restrictions:
disallowing concept union
and replacing full existential qualification by limited existential quantification,
respectively.
The logic \AL{} is obtained from \ALC{}
by using both of these 
restrictions.

\subsection{Classical Complexity Results}

In this section, we consider the problem
of \emph{concept satisfiability} with respect to empty TBoxes.
In this decision problem, the input consists of a concept~$C$,
and the question is whether~$C$ is satisfiable---%
that is, whether there exists an interpretation~$\III$
such that~$C^{\III} \neq \emptyset$.
This problem is \PSPACE{}-complete for \ALC{},
\co{\NP}-complete for \ALE{}, \NP{}-complete for \ALU{},
and polynomial-time solvable for \AL{}.

\begin{proposition}[\citealt{SchmidtSchaussSmolka91}]
\label{prop:alc-pspace}
Concept satisfiability for the logic \ALC{} is \PSPACE{}-complete.
\end{proposition}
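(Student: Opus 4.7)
The plan is to prove both containment in \PSPACE{} and \PSPACE{}-hardness separately. For containment, I would use a tableau algorithm that explores the branches of a would-be model one at a time in depth-first fashion, reusing space across sibling branches. For hardness, I would give a polynomial-time reduction from \TQBF{}, encoding quantifier alternation via the role structure available in \ALC{}.

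For the upper bound, I would first put the input concept~$C$ into negation normal form in linear time, so that the only negated concepts are of the form~$\neg A$ for atomic~$A$. I would then run the standard tableau calculus for \ALC{}: starting from a root labelled with~$\SBs C \SEs$, exhaustively apply the~$\sqcap$-rule and the~$\sqcup$-rule (the latter nondeterministically choosing one disjunct, since a \PSPACE{}-algorithm may be nondeterministic by Savitch) to produce a fully~$\sqcap$/$\sqcup$-saturated label, checking for clashes of the form~$\SBs A, \neg A \SEs$. Then for each~$\exists R. D$ in the label I would generate an~$R$-successor whose label is~$\SBs D \SE \cup \SBs E \SM \forall R. E \text{ is in the current label} \SE$, recurse into it, discard it, and continue with the next successor. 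Since every subconcept generated at a successor is a subconcept of the input, and since each successor strictly decreases the role-depth, the recursion depth is bounded by~$|C|$ and at each level only a single label of size~$O(|C|)$ together with the current pending~$\exists$-obligation needs to be stored. Hence the overall space is polynomial.

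For the lower bound, I would reduce from \TQBF{}. Given a closed quantified Boolean formula~$\exists x_1 \forall x_2 \dotsm Q_n x_n.\, \varphi$ with~$\varphi$ a propositional matrix, I would introduce one role~$R$ and, for each variable~$x_i$, two atomic concepts~$X_i$ and~$\ol{X_i}$ representing the choices for~$x_i$. I would then build a concept that at depth~$i{-}1$ uses~$\exists R.(\dotsc)$ for existentially quantified variables and~$\forall R.(\dotsc)$ for universally quantified variables, where at each level the concept forces the choice of truth value for~$x_i$ via a disjunction~$(X_i \sqcup \ol{X_i})$, propagates all earlier choices down to the leaves (using conjunctions with~$\forall R.\,\dotsc$ rewrites of the earlier labels), and at depth~$n$ asserts the propositional encoding of~$\varphi$ over the~$X_i$,~$\ol{X_i}$. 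The tree model property of \ALC{} then ensures that a satisfying interpretation exists if and only if the QBF is true. The whole construction is polynomial in the size of the QBF.

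The main obstacle is the bookkeeping in the hardness reduction: one has to make sure that the truth value chosen for a variable~$x_i$ at level~$i$ is visible unambiguously at level~$n$ along every path, because \ALC{} does not have nominals or functional roles. The standard trick is to propagate the labels~$X_i$ and~$\ol{X_i}$ downward using a conjunction with~$\forall R.(\dotsc)$ on every step, so that each leaf sees a complete and consistent assignment to~$x_1, \dotsc, x_n$, and then to evaluate~$\varphi$ at the leaves. The upper bound is routine once the depth-first trace argument is in place; the subtlety there is simply to argue that one only needs to retain, at each recursive call, the label of the current node and a pointer to the next unexpanded~$\exists$-formula.
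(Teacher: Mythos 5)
The paper does not prove this proposition itself---it is quoted from Schmidt-Schau{\ss} and Smolka---so I am comparing your sketch against the standard argument that the citation stands for. Your upper bound is correct and is essentially that argument: normalize to NNF, saturate with the $\sqcap$- and $\sqcup$-rules (existential nondeterminism is harmless by Savitch's theorem), and process the $\exists R.D$ obligations one at a time in depth-first fashion, the successor label $\{D\} \cup \{E : \forall R.E \text{ in the current label}\}$ strictly decreasing in role depth so that the recursion stack stays polynomial.

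The hardness direction, however, has a genuine gap in the treatment of universal quantifiers. Encoding $\forall x_i$ by a concept of the form $\forall R.\bigl((X_i \sqcup \ol{X_i}) \sqcap \cdots\bigr)$ does not force both truth values of $x_i$ to be explored: $\forall R.C$ is vacuously satisfied at a domain element with no $R$-successors, and is equally satisfied by an element with a single successor that picks only $X_i$. As literally described, your concept is therefore satisfiable by a model that simply omits $R$-successors at the first universal level, independently of the matrix $\varphi$, and the reduction is unsound. The standard fix is to encode a universal level as $\exists R.X_i \sqcap \exists R.\ol{X_i} \sqcap \forall R.(\text{continuation})$, explicitly demanding one successor per truth value; this interplay of $\exists$ and $\forall$ is precisely what pushes the problem from \NP{}-hard to \PSPACE{}-hard. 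A secondary issue: with $X_i$ and $\ol{X_i}$ as two \emph{independent} atomic concepts, a leaf may satisfy both or neither and thus need not witness a genuine assignment; either use a single atom $X_i$ together with its literal negation $\neg X_i$ (which NNF permits), or propagate an ``exactly one of $X_i,\ol{X_i}$'' constraint down every level. The obstacle you single out---propagating earlier choices downward via nested $\forall R$---is real but is the routine part; the existence-forcing at universal levels is the step your sketch is missing.
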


\begin{proposition}[\citealt{DoniniHollunderLenzeriniSpaccamelaNardiNutt02}]
\label{prop:ale-conp}
Concept satisfiability for the logic \ALE{}
is \co{\NP}-complete.
\end{proposition}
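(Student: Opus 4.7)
The plan is to establish both directions using the trace technique for \ALE{}, exploiting that the absence of disjunction forces a tree-shaped model structure in which satisfiability decomposes into independent per-path checks.

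For the upper bound (membership in \co{\NP}), I would show that the complementary problem --- unsatisfiability of an \ALE{} concept --- lies in \NP{}. The key observation is that any \ALE{} concept~$C$ has a canonical tree-shaped model template whose nodes are labelled by the conjunction of the concept that spawned them via an $\exists$-successor together with all universal concepts inherited from their parent along the relevant role. A root-to-descendant path, together with the choice of $\exists$-witness at each step, forms a \emph{trace} of depth bounded by the quantifier depth of~$C$ and hence of polynomial length. The concept~$C$ is unsatisfiable iff there exists such a trace whose terminal element's accumulated label contains an atomic clash $A \sqcap \neg A$. An \NP{} algorithm can therefore guess a trace together with the two clashing atomic concepts and verify the clash in polynomial time, placing satisfiability in \co{\NP} by complementation.

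For the lower bound (\co{\NP}-hardness), I would reduce from propositional \SAT{} to \ALE{} \emph{unsatisfiability}. Given a 3-CNF formula~$\varphi$ over variables $x_1,\dots,x_n$ with clauses $c_1,\dots,c_m$, the plan is to construct an \ALE{} concept~$C_\varphi$ whose traces are in bijection with truth assignments $\alpha$ to the $x_i$, arranged so that a trace exhibits an atomic clash iff $\alpha$ satisfies $\varphi$. Variable choices can be encoded by pairs $\exists R.A_i \sqcap \exists R.\neg A_i$ along a role chain, with $\forall R$-restrictions used to propagate the partial assignment through the chain. The delicate part is encoding clause evaluation: since $\sqcup$ is unavailable, one has to exploit the interaction $\exists R.D \sqcap \forall R.E$ --- which forces an $R$-successor labelled $D \sqcap E$ --- to simulate the disjunctive content of each clause indirectly across multiple traces, so that an atomic clash appears exactly on those traces corresponding to satisfying assignments.

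I expect the lower bound to be the main obstacle. The upper bound via traces is essentially routine once the canonical-tree picture is set up, but the construction of $C_\varphi$ requires careful coordination: the traces must faithfully enumerate the truth assignments, clauses must be evaluated using only conjunction and quantifier nesting (never disjunction), and clashes must materialise precisely at satisfying assignments --- all within pure \ALE{} with an empty TBox.
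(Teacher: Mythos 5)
First, a point of comparison: the paper does not prove this proposition at all---it imports it from Donini et al.\ (1992) as a known classical result---so there is no in-paper argument to measure you against; I am judging your proposal against the standard literature proof. Your upper bound is essentially that standard argument and is sound: since \ALE{} has no $\sqcup$, the tableau completion of a concept is unique and tree-shaped, its depth is bounded by the role depth of the concept, and the label of a node is computable in polynomial time from the sequence of $\exists$-choices (the trace) leading to it; hence unsatisfiability has polynomial-size trace certificates and lies in \NP{}. (Two small omissions: a clash may also be an occurrence of $\bot$, which the grammar used here permits, and it may occur at an interior node of the canonical tree---but one simply truncates the trace there.)

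The genuine gap is in the lower bound, and it sits precisely where you flag the difficulty. In your scheme a trace is a truth assignment $\alpha$, and you need ``this trace exhibits a clash'' to be equivalent to ``$\alpha$ satisfies every clause of $\varphi$.'' But a clash is a very local event: it is witnessed by one complementary pair $\{A, \neg A\}$ in the label of one node, and each of $A$ and $\neg A$ enters that label through a single derivation chain (either it is a conjunct of the $\exists$-witness chosen at that step, or a matching $\forall$-restriction was present one level up, and so on recursively). For a fixed trace this makes the clash predicate a disjunction, over nodes and atoms, of conjunctions of only \emph{two} conditions, each of which is itself a disjunction over single steps of the trace. Such a predicate cannot express the $m$-fold conjunction ``every clause is satisfied'': already for $\varphi = x_1 \wedge \dots \wedge x_m$, any single clash condition that admits the all-true assignment also admits exponentially many others, so clashes cannot ``materialise precisely at satisfying assignments'' under the gadget family you describe (traces in bijection with assignments, clause evaluation folded into the clash). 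This is not merely delicate; it fails. The actual proof does not reduce from unrestricted 3-\SAT{}: Donini et al.\ reduce from all-positive \textsc{One-in-Three-3SAT}, exploiting the fact that $\exists R.C_1 \sqcap \dots \sqcap \exists R.C_k \sqcap \forall R.D$ is unsatisfiable iff some $C_i \sqcap D$ is, in a way tailored to the ``exactly one true literal per clause'' structure. Choosing an appropriate NP-complete source problem whose constraints are conjunctive enough to be detected by a single complementary pair is the missing idea; to complete the argument you would need either to adopt that reduction or to supply a genuinely different mechanism for aggregating the per-clause conditions into one clash.
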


\begin{proposition}[\citealt{DoniniLenzeriniNardiNutt97}]
\label{prop:alu-np}
Concept satisfiability for the logic \ALU{}
is \NP-complete.
\end{proposition}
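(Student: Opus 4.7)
The plan is to establish the two directions of \NP{}-completeness separately.

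For \NP-hardness, I would give a direct polynomial-time reduction from propositional \SAT{}. Given a CNF formula $\varphi$ over propositional variables $p_{1},\dotsc,p_{n}$, I would introduce a fresh atomic concept $A_{p}$ for each variable $p$, translate each positive literal $p$ to $A_{p}$ and each negative literal $\neg p$ to $\neg A_{p}$, and let $C_{\varphi}$ be the \ALU{} concept obtained by replacing $\wedge$ by $\sqcap$ and $\vee$ by $\sqcup$ in the translated formula. The concept $C_{\varphi}$ is quantifier-free and in negation normal form, and a routine argument identifying truth assignments with interpretations over a singleton domain shows that $\varphi$ is satisfiable iff $C_{\varphi}$ is.

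For membership in \NP{}, the plan is to nondeterministically guess, for each occurrence of $\sqcup$ in the parse tree of $C$, which of its two disjuncts to keep. This produces an \AL{} concept $C_{\sigma}$, whose satisfiability can then be decided in polynomial time using the polynomial-time decision procedure for \AL{} concept satisfiability noted above. Soundness is immediate: since negations occur only in front of atomic concepts in negation normal form, and both $\sqcap$ and $\forall R$ are monotone in their subconcepts, $C_{\sigma}$ entails $C$, so any model of $C_{\sigma}$ is also a model of $C$.

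The main obstacle will be the completeness of this reduction---showing that whenever $C$ is satisfiable, some guess $\sigma$ yields a satisfiable $C_{\sigma}$. My plan here is to exploit a strong tree-model property of \ALU{} that follows from the restriction to $\exists R.\top$: starting from an arbitrary tree model of $C$, one may iteratively discard all but one $R$-successor at each node without invalidating any $\exists R.\top$ or $\forall R.D$ constraint, since the universal constraint propagates to the surviving successor and the existential demands only existence. In the resulting canonical tree model $\JJJ$, each $\sqcup$ occurrence in the parse tree of $C$ corresponds to a unique modal path from the root, and hence---when that path is realised in $\JJJ$---to a unique node at which at least one of the two disjuncts must hold. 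Defining $\sigma$ to record, for each such occurrence, the disjunct actually witnessed at the associated node (and making an arbitrary choice when the path is not realised, in which case the enclosing $\forall R$ is vacuously true), a straightforward induction on the parse tree of $C$ then shows that $\JJJ$ is a model of $C_{\sigma}$.
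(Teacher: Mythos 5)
The paper does not prove this proposition at all: it is imported verbatim from Donini, Lenzerini, Nardi and Nutt (1997), so there is no in-paper argument to compare against. Judged on its own, your proposal is correct and is essentially the standard argument from that literature. The hardness direction is fine: a CNF formula translated into a quantifier-free concept over $\sqcap$, $\sqcup$ and atomic negation is an \ALU{} concept under the paper's definition (the NNF grammar admits $\neg A$), and satisfiability over a one-element domain coincides with propositional satisfiability. The membership direction is also sound: guessing one disjunct per occurrence of $\sqcup$ and invoking the polynomial-time procedure for \AL{} (Proposition~\ref{prop:al-p}) gives an \NP{} algorithm, with soundness by monotonicity of all NNF contexts. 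Your completeness argument is the right one---because only $\exists R.\top$ is available, a tree model can be pruned to at most one $R$-successor per role per node without violating any constraint, after which every subconcept occurrence is evaluated at a unique node and the witnessed disjuncts define $\sigma$. The only loose end is presentational: an inner $\sqcup$ may sit at a realised modal path yet be irrelevant because an outer $\sqcup$ chose the other disjunct; your ``arbitrary choice'' clause should cover that case too, but since the induction only ever recurses into chosen disjuncts, this does not affect correctness.
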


\begin{proposition}[\citealt{SchmidtSchaussSmolka91}]
\label{prop:al-p}
Concept satisfiability for the logic \AL{}
is polynomial-time solvable.
\end{proposition}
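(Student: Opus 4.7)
The plan is to give a deterministic polynomial-time tableau algorithm for \AL{} concept satisfiability, exploiting two features of \AL{}: the absence of disjunction (so the tableau involves no branching choices) and the restriction to~$\exists R. \top$ as the only existential construct (so one representative~$R$-successor per role is enough).

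Given a concept~$C$ in NNF, I would build a labeled tree as follows. The root is labeled with~$\{C\}$; in every node, the label is closed under the~$\sqcap$-rule (whenever~$C_1 \sqcap C_2$ is in the label, both conjuncts are added); a clash is declared whenever a label contains~$\bot$ or both~$A$ and~$\neg A$ for some atomic~$A$; and for each role~$R$ such that~$\exists R. \top$ is in a node's label, a single~$R$-successor is created whose label is~$\{D : \forall R. D \text{ is in the parent's label}\}$. The algorithm accepts iff no clash is produced. Correctness is the standard tableau argument: soundness by reading a clash-free tableau directly as an interpretation satisfying~$C$ (one domain element per node, with atomic and role memberships taken from the labels), and completeness by noting that any model of~$C$ induces a clash-free tableau once witnesses are chosen for the occurrences of~$\exists R. \top$.

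The main obstacle is the polynomial-time bound, which requires bounding the size of the tableau. The argument I have in mind tracks \emph{occurrences} of subconcepts of~$C$ rather than merely the distinct subconcepts. Every occurrence of a subconcept~$D$ at a position in~$C$'s syntax tree is propagated to exactly one tableau node: the sequence of~$\forall R$-operators on the path from~$C$'s root down to~$D$ determines which path from the tableau root leads to the node whose label contains that occurrence, while the~$\sqcap$-operators encountered along the way leave us at the same tableau node. Summing over all such positions, the total number of subconcept-occurrences across all tableau node labels is~$O(|C|)$; hence the tableau has at most~$O(|C|)$ non-trivial nodes, and it can be constructed and checked for clashes in time polynomial in~$|C|$. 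The main subtlety is carrying this occurrence-to-node correspondence carefully through the~$\sqcap$-closure and~$\forall R$-propagation steps; once that is in hand, polynomial running time follows.
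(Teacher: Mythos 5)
Your proof is correct, but note that the paper does not prove this proposition at all: it is imported as a known result with a citation to Schmidt-Schau\ss{} and Smolka (1991), so any actual argument is necessarily your own route. Your deterministic tableau is sound: with $\sqcup$ absent there are no branching choices, and with $\exists R.\top$ as the only existential construct a single $R$-successor per role per node suffices, carrying exactly the value restrictions $\SB D \SM \forall R.D \text{ in the parent's label} \SE$; the soundness/completeness arguments you sketch are the standard ones and go through. The one place where care is genuinely needed is the size bound, and your occurrence-counting argument handles it: each position in the syntax tree of~$C$ is propagated to at most one tableau node (the $\sqcap$-rule keeps child occurrences in the same node, and the $\forall R$-step sends them to the unique $R$-successor), which yields an injection from non-root nodes into occurrences of $\exists R.\top$ in~$C$ and hence $O(|C|)$ nodes and polynomial total work. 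This is a legitimate alternative to the classical presentation, which typically establishes tractability of \AL{}-style languages via normalization and structural comparison (or by observing that the induced constraint system has at most linearly many constraints); your tableau formulation has the advantage of lining up directly with the alternating tableau machinery the paper uses for Theorems~\ref{thm:alc-para-conp}--\ref{thm:alc-fpt}, of which Proposition~\ref{prop:al-p} becomes the degenerate case with no nondeterminism of either kind.
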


\begin{table*}[ht!]
\begin{center}
  \begin{small}
  \begin{tabular}{p{1.75cm} p{12.4cm}}
    \toprule
    \multicolumn{2}{l}{\textbf{The $\sqcap$-rule}} \\
    \textit{Condition} & $\AAA$ contains~$(C_1 \sqcap C_2)(x)$,
      but it does not contain both~$C_1(x)$ and~$C_2(x)$. \\
    \textit{Action} & $\AAA' = \AAA \cup \SBs C_1(x), C_2(x) \SEs$. \\
    \midrule
    \multicolumn{2}{l}{\textbf{The $\sqcup$-rule}} \\
    \textit{Condition} & $\AAA$ contains~$(C_1 \sqcup C_2)(x)$,
      but neither~$C_1(x)$ nor~$C_2(x)$. \\
    \textit{Action} & $\AAA' = \AAA \cup \SBs C_1(x) \SEs$,
      $\AAA'' = \AAA \cup \SBs C_2(x) \SEs$. \\
    \midrule
    %
    %
    \multicolumn{2}{l}{\textbf{The $\exists$-rule}} \\
    \textit{Condition} & $\AAA$ contains~$(\exists R.C)(x)$,
      but there is no individual~$z$ such
      that~$C(z)$ and~$R(x,z)$ are in~$\AAA$. \\
    \textit{Action} & $\AAA' = \AAA \cup \SBs C(y), R(x,y) \SEs
      \setminus \SB (\exists R'. C')(x') \in \AAA \SM x' = x,
      (\exists R'.C') \neq (\exists R.C) \SE$, \\
      & where~$y$ is an arbitrary individual not occurring
      in~$\AAA$. \\
    \midrule
    \multicolumn{2}{l}{\textbf{The $\forall$-rule}} \\
    \textit{Condition} & $\AAA$ contains~$(\forall R.C)(x)$
      and~$R(x,y)$,
      but it does not contain~$C(y)$. \\
    \textit{Action} & $\AAA' = \AAA \cup \SBs C(y) \SEs$. \\
    \midrule
    \multicolumn{2}{l}{\textbf{The $\bot$-rule}} \\
    \textit{Condition} & $\AAA$ contains~$A(x)$
      and~$(\neg A)(x)$,
      but it does not contain~$\bot$. \\
    \textit{Action} & $\AAA' = \AAA \cup \SBs \bot \SEs$. \\
    \bottomrule
  \end{tabular}
  \end{small}
\end{center}
  \vspace{-5pt}
  \caption{Transformation rules of the tableau algorithm for \ALC{} concept satisfiability.}
  \label{table:tableau-rules}
  \vspace{-5pt}
\end{table*}

\subsection{Parameterized Complexity Results}

In order to conveniently describe the parameterized complexity results
that we will establish in this section, we firstly describe an algorithm
for deciding concept satisfiability for \ALC{} in polynomial space
(see, e.g.,
\citealt{BaaderCalvaneseMcGuinnessNardiPatelSchneider03},
Chapter 2).
To use this algorithm to prove the parameterized
complexity results in this section, we describe a variant of the
algorithm that can be implemented by a polynomial-time
alternating Turing machine---i.e., a nondeterministic Turing machine
that can alternate between existential and universal nondeterminism.

The algorithm uses ABoxes~$\AAA$ as data structures, and works
by extending these ABoxes by means of several transformation
rules.
These rules are described Table~\ref{table:tableau-rules}---%
however, not all rules are applied in the same fashion.
The $\sqcap$-rule, the $\forall$-rule and the $\bot$-rule
are used as deterministic rules, and are applied greedily
whenever they apply.
The $\sqcup$-rule and the $\exists$-rule are nondeterministic rules,
but are used in a different fashion.
The $\sqcup$-rule transforms an ABox~$\AAA$ into one of two
different ABoxes~$\AAA'$ or~$\AAA''$ nondeterministically.
The $\sqcup$-rule is implemented using existential nondeterminism---i.e.,
the algorithm succeeds if at least one choice of~$\AAA'$ and~$\AAA''$
ultimately leads to the algorithm accepting.
(For more details on existential and universal nondeterminism
and alternating Turing machines, see, e.g.,~\citealt{FlumGrohe06},
Appendix~A.1.)
The $\exists$-rule, on the other hand, transforms an ABox~$\AAA$ into
a unique next ABox~$\AAA'$, but it is a nonmonotonic rule that can
be applied in several ways---the condition can be instantiated in different
ways, and these instantiations are not all possible anymore after having
applied the rule.
The $\exists$-rule is implemented using universal nondeterminism---i.e.,
the algorithm succeeds if all ways of instantiating the condition of
the $\exists$-rule (and applying the rule accordingly) ultimately
lead to the algorithm accepting.

The tableau algorithm works as follows.
Let~$C_0$ be an \ALC{} concept for which we want to decide
satisfiability.
We construct an initial ABox~$\AAA_0 = \SBs C_0(x_0) \SEs$,
where~$x_0 \in N_{O}$ is an arbitrary individual.
We proceed in two alternating phases:~(I) and~(II)---%
starting with phase~(I).

In phase~(I), we apply the deterministic rules (the $\sqcap$-rule,
the $\forall$-rule and the $\bot$-rule) and
the nondeterministic $\sqcup$-rule exhaustively,
until none of these rules is applicable anymore.
For the $\sqcup$-rule we use existential nondeterminism to
choose which of~$\AAA'$ and~$\AAA''$ to use.
When none of these rules is applicable anymore,
we proceed to phase~(II).
In phase~(II), we apply the $\exists$-rule once, using
universal nondeterminism to choose how to instantiate the
condition (and we apply the rule accordingly).
Then, we go back to phase~(I).

Throughout the execution of the algorithm, there is always
a single current ABox~$\AAA$.
Whenever it holds that~$\bot \in \AAA$, the algorithm rejects.
If at some point no rule is applicable anymore---that is,
if at some point we are in phase~(II) and the $\exists$-rule
is not applicable---%
the algorithm accepts.

This algorithm essentially works the same way as known
tableau algorithms for \ALC{} concept satisfiability
(see, e.g.,
\citealt{BaaderCalvaneseMcGuinnessNardiPatelSchneider03},
Chapter 2).
The only difference is that in the algorithm described above
the implementation of the $\sqcup$-rule using existential
nondeterminism and the implementation of the $\exists$-rule
using universal nondeterminism is built in.
In the literature, typically descriptions of tableau algorithms leave
freedom for different implementations of the way in which the
search tree is traversed.
One can think of the algorithm described above as traversing
a search tree that is generated by the different (existential and
universal) nondeterministic choices that are made in the execution
of the algorithm.
This search tree is equivalent to the search tree of the usual tableau
algorithm for \ALC{} concept satisfiability.
Thus, we get that the algorithm is correct.
In fact, this algorithm is a reformulation
of the standard algorithm known from the literature
\cite{BaaderCalvaneseMcGuinnessNardiPatelSchneider03}.

\begin{proposition}[{\citealt{BaaderCalvaneseMcGuinnessNardiPatelSchneider03}}]
The tableau algorithm described above
for an alternating polynomial-time Turing machine
correctly decides concept satisfiability for \ALC{}.
\end{proposition}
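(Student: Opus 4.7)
The plan is to verify correctness by matching the alternating Turing machine's computation tree with the standard tableau search tree for \ALC{} concept satisfiability, whose soundness and completeness are already established in the literature.

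First, I would verify that every computation path of the ATM runs in polynomial time. The key observation is that the $\exists$-rule not only introduces a fresh individual~$y$ but also discards every other $(\exists R'.C')$ assertion attached to the same individual~$x$. Hence along a single path the current ABox describes essentially one branch of a tableau: the chain of individuals has length bounded by the quantifier nesting depth of~$C_0$, and each individual carries only subconcepts of~$C_0$. The deterministic rules (the $\sqcap$-, $\forall$- and $\bot$-rule) and the existential $\sqcup$-rule therefore terminate phase~(I) in polynomially many steps, and each phase~(II) step descends one level in the tableau tree.

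Next, I would prove soundness and completeness by induction on the quantifier depth of the subconcepts driving the computation. For soundness, an accepting ATM computation yields, via the usual model construction, an interpretation~$\III$ with~$x_0^{\III} \in C_0^{\III}$: the existential nondeterminism of the $\sqcup$-rule commits to one disjunct per union, the universal nondeterminism of the $\exists$-rule ensures that every existential witness is satisfiable in the submodel obtained by the inductive hypothesis, the $\forall$-rule enforces the semantic condition for universal quantifiers, and the $\bot$-rule verifies the absence of clashes. For completeness, given an~$\III$ and a point~$d \in C_0^{\III}$, one guides the existential choices according to~$\III$ and observes that every universal branch of the $\exists$-rule can be completed successfully, because~$\III$ always supplies a suitable~$R$-successor realising the chosen subconcept.

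The main obstacle I anticipate is to argue rigorously that the \emph{deletion} of the other~$(\exists R'.C')(x)$ assertions inside the $\exists$-rule is sound in combination with universal nondeterminism: the deleted assertions are examined independently in sibling universal branches, so no satisfiability obligation is lost. Once this correspondence with the standard tableau procedure is made precise, correctness reduces directly to the classical correctness proof (see, e.g., \citealt{BaaderCalvaneseMcGuinnessNardiPatelSchneider03}, Chapter~2).
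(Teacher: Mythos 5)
Your proposal is correct and follows essentially the same route as the paper, which likewise argues that the ATM's computation tree is just the search tree of the standard tableau algorithm for \ALC{} and then defers to the classical correctness result of \citet{BaaderCalvaneseMcGuinnessNardiPatelSchneider03}. You supply more detail than the paper does (the per-path polynomial bound via the trace technique, the induction on quantifier depth, and the observation that the assertions deleted by the $\exists$-rule are re-examined in sibling universal branches), but the underlying argument is the same reformulation-of-the-standard-algorithm claim.
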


We will now consider several parameterized variants
of the problem of concept satisfiability for \ALC{}.
These parameters, in a sense, measure the distance of
an \ALC{} concept to the logics \ALE{}, \ALU{} and \AL{},
respectively.
We will make use of the tableau algorithm described above
to establish upper bounds on the complexity of these problems.
Lower bounds follow directly from
Propositions~\ref{prop:ale-conp}--\ref{prop:al-p}.

We begin with the parameterized variant of \ALC{} concept
satisfiability where the parameter measures the distance to \ALE{}.

\begin{theorem}
\label{thm:alc-para-conp}
Concept satisfiability for the logic \ALC{},
parameterized by the number of occurrences
of the union operator~$\sqcup$ in~$C$,
is \para{\co{\NP}}-complete.
\end{theorem}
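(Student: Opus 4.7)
The plan is to prove the two halves of \para{\co{\NP}}-completeness separately, using the tableau algorithm described above.

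For hardness I would observe that when $k = 0$ the concept $C$ has no $\sqcup$-operators and is therefore an \ALE{} concept; by Proposition~\ref{prop:ale-conp}, \ALE{} satisfiability is \co{\NP}-hard, so our parameterized problem is \co{\NP}-hard already at the fixed parameter value $k = 0$, which by the standard criterion noted earlier suffices for \para{\co{\NP}}-hardness.

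For membership I would reuse the tableau algorithm as an alternating polynomial-time procedure and then eliminate its existential nondeterminism at an fpt cost. Recall that the $\sqcup$-rule provides the existential nondeterminism and the $\exists$-rule provides the universal nondeterminism. The key structural observation is that, because the $\exists$-rule destructively removes all other existential assertions of the individual it processes, every single universal branch of the algorithm proceeds along a \emph{linear} chain of individuals $x_0, y_1, \dotsc, y_d$. Each $\sqcup$-rule firing acts on an assertion $(D_1 \sqcup D_2)(z)$ whose head concept must be one of the $\sqcup$-subconcepts of $C$, and each syntactic occurrence of $\sqcup$ in $C$ yields at most one such assertion along the chain---the individual at which it fires being pinned down by the quantifier depth at which the occurrence sits inside $C$. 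Hence at most $k$ $\sqcup$-firings occur in any single universal branch.

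Armed with this bound I would replace the $\sqcup$-rule's existential branching by a deterministic enumeration: for each of the $2^k$ assignments $\sigma \in \{L, R\}^k$ to the $k$ $\sqcup$-occurrences in $C$, simulate the now-deterministic tableau with $\sqcup$-choices dictated by $\sigma$ and check whether the resulting branch is open, accepting (for this universal branch) if some $\sigma$ yields an open branch. Combined with the original universal nondeterminism for the $\exists$-rule this produces a nondeterministic fpt-algorithm using only universal nondeterminism, and since $C$ is satisfiable iff along every universal branch some $\sqcup$-assignment keeps that branch open, the algorithm correctly decides the problem and witnesses membership in \para{\co{\NP}}. The main obstacle will be turning the informal ``at most $k$ firings per universal branch'' argument into a rigorous proof, by induction on the structure of $C$, that tracks how $\sqcup$-subconcepts propagate through the interplay of the $\sqcap$-, $\forall$-, and $\exists$-rules along the linear chain imposed by the destructive $\exists$-rule.
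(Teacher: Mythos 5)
Your hardness argument is exactly the paper's and is fine. The membership half, however, has a genuine gap in how you eliminate the existential nondeterminism. You commit to a complete assignment $\sigma \in \{L,R\}^k$ for the $\sqcup$-occurrences \emph{globally}, whereas in the alternating algorithm the $\sqcup$-choices at an individual are made \emph{after} the universal $\exists$-choices that created that individual but \emph{before} the universal $\exists$-choice at that same individual. Your stated correctness criterion, ``$C$ is satisfiable iff along every universal branch some $\sigma$ keeps that branch open,'' swaps these quantifiers and is false. Consider $C = (P \sqcup Q) \sqcap \exists R.A \sqcap \exists R.B \sqcap (\neg P \sqcup \forall R.\neg A) \sqcap (\neg Q \sqcup \forall R.\neg B)$. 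It is unsatisfiable: an element of $C$ lies in $P$ or $Q$; in the first case it must satisfy $\forall R.\neg A$, contradicting $\exists R.A$, and symmetrically in the second. Yet each of the two universal branches at the root (processing $\exists R.A$ or $\exists R.B$) is kept open by a $\sigma$ chosen with hindsight: take $\sigma = (Q,\, \neg P,\, \forall R.\neg B)$ for the $\exists R.A$ branch and $\sigma = (P,\, \forall R.\neg A,\, \neg Q)$ for the $\exists R.B$ branch. So under your criterion the algorithm accepts an unsatisfiable concept. The opposite reading of your procedure---loop over $\sigma$ outermost and accept iff some $\sigma$ makes \emph{all} universal branches open---errs the other way: it rejects the satisfiable concept $\forall R.(A \sqcup B) \sqcap \exists R.\neg A \sqcap \exists R.\neg B$, where no single upfront choice for the one $\sqcup$-occurrence survives both successors.

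The paper keeps the $\sqcup$-choice at its correct position in the alternation and de-randomizes it by \emph{backtracking within a universal path}: when the $\sqcup$-rule fires, the path continues with $\AAA'$, and only a path that later reaches a clash backs up and re-explores from $\AAA''$, re-spawning the universal branching below that point. This makes the overall machine accept iff, at each firing, all continuations of $\AAA'$ accept or all continuations of $\AAA''$ accept---the existential move evaluated at the right moment---at a cost of a factor $2^k$ per path, using exactly your (correct) observation that each occurrence fires at most once along the linear chain. To repair your proof, replace the global enumeration of $\sigma$ by this interleaved backtracking; alternatively, show the complement is in \para{\NP} by guessing the adversary's $\exists$-rule strategy as a subtree that doubles only at $\sqcup$-firings (hence has at most $2^k$ leaves, each reached in polynomially many steps) and verifying that every leaf closes.
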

\begin{proof}
Hardness for \para{\co{\NP}} follows from the fact
that \ALE{} concept satisfiability is \co{\NP}-complete
(Proposition~\ref{prop:ale-conp}).
Any \ALE{} concept is an \ALC{} concept with zero occurrences
of the union operator~$\sqcup$.
Therefore, the problem of \ALC{} concept satisfiability
parameterized by the number~$k$ of occurrences
of the union operator~$\sqcup$ in~$C$
is already \co{\NP}-hard for the parameter value~$k = 0$.
From this, it follows that the parameterized problem
is \para{\co{\NP}}-hard \cite{FlumGrohe03}.

To show that the parameterized problem is also contained
in \para{\co{\NP}}, we describe an algorithm that can be implemented
by an alternating Turing machine that only makes use of universal
nondeterminism and that runs in fixed-parameter tractable time.
This algorithm is similar to the tableau algorithm for \ALC{}
described above, with the only difference that the $\sqcup$-rule is now
not implemented using existential nondeterminism.
Instead, we deterministically iterate over all possible choices that can
be made in executions of the $\sqcup$-rule.
That is, whenever the $\sqcup$-rule is applied, resulting in two possible
next ABoxes~$\AAA'$ and~$\AAA''$, we firstly continue the algorithm
with~$\AAA'$, and if the continuation of the algorithm with~$\AAA'$ failed,
we then continue the algorithm with~$\AAA''$ instead.

Let~$k$ be the number of occurrences of the union operator~$\sqcup$
in~$C$. For each occurrence, the $\sqcup$-rule is applied at most once.
Therefore, the total number of possible choices resulting from executions
of the $\sqcup$-rule is at most~$2^k$.
Therefore, this modification of the algorithm can be implemented by
an alternating Turing machine that only uses universal nondeterminism
and that runs in time~$2^k \cdot |C|^{O(1)}$.
In other words, the problem is in \para{\co{\NP}},
and thus is \para{\co{\NP}}-complete.
\end{proof}

\begin{theorem}
\label{thm:alc-para-np}
Concept satisfiability for the logic \ALC{},
parameterized by the number of occurrences
of full existential qualification~$\exists R. C$ in~$C$,
is \para{\NP}-complete.
\end{theorem}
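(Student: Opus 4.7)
The plan is to follow the proof of Theorem~\ref{thm:alc-para-conp} with the roles of the $\sqcup$-rule and the $\exists$-rule interchanged, and correspondingly with existential and universal nondeterminism swapped.

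For the lower bound, I would observe that every \ALU{} concept is an \ALC{} concept with zero occurrences of full existential qualification, since \ALU{} only allows limited existentials $\exists R.\top$. Concept satisfiability for \ALU{} is \NP{}-hard by Proposition~\ref{prop:alu-np}, so the parameterized problem is already \NP{}-hard for the constant parameter value $k = 0$, which by \cite{FlumGrohe03} yields \para{\NP}-hardness.

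For the upper bound, I would modify the tableau algorithm described earlier so that it uses only existential nondeterminism. The $\sqcup$-rule retains its existential nondeterminism, while the $\exists$-rule, which in the original algorithm uses universal nondeterminism, is instead handled by deterministically iterating over all possible ways to instantiate its condition---symmetric to how the proof of Theorem~\ref{thm:alc-para-conp} deterministically iterates over the choices of the $\sqcup$-rule. Analogously to that proof, each occurrence of full existential qualification in $C$ contributes at most one genuinely branching application of the $\exists$-rule to any single run of the algorithm, so the iteration explores at most $2^k$ instantiation sequences. The resulting alternating Turing machine uses only existential nondeterminism and runs in time $2^k \cdot |C|^{O(1)}$, which places the problem in \para{\NP}, and combined with the hardness above gives \para{\NP}-completeness.

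The main obstacle, and the place where the proof must go beyond a literal dualization of Theorem~\ref{thm:alc-para-conp}, is that the $\exists$-rule can also be triggered by limited existentials $\exists R.\top$, and the number of such applications is not bounded by $k$. I would handle this by observing that a limited existential produces a successor whose concept set consists only of $\top$ together with the $\forall R.D$ concepts propagated from the parent; the choice among limited existentials at an individual therefore corresponds to essentially independent subproblems that can be absorbed into the existential nondeterminism used for $\sqcup$, mirroring why \ALU{} concept satisfiability lies in \NP{}. Restricting the iteration to the at-most-$k$ full existentials while folding the limited-existential branching into the existential guessing is the delicate step that carries the $f(k)$ bound.
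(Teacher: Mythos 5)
Your proposal follows the paper's proof essentially verbatim: the same hardness argument via \ALU{} at parameter value~$k=0$, and the same membership argument that replaces the universal nondeterminism of the $\exists$-rule by deterministic iteration while keeping the $\sqcup$-rule existentially nondeterministic, yielding an alternating machine with only existential nondeterminism running in time~$f(k)\cdot|C|^{O(1)}$. Your closing observation about limited existentials~$\exists R.\top$ addresses a point that the paper's own proof silently elides (it simply asserts that the $\exists$-rule has at most~$k$ instantiations and is applied at most~$k$ times per run), and your fix---keeping one canonical $\top$-successor per role and amortizing that work as in the polynomial-time argument for \AL{}/\ALU{}, so that only the at-most-$k$ full existentials contribute multiplicative branching---is the right way to make the~$f(k)$ bound rigorous.
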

\begin{proof}
Hardness for \para{\NP} follows from the fact
that \ALU{} concept satisfiability is \NP-complete
(Proposition~\ref{prop:alu-np}).
Any \ALU{} concept is an \ALC{} concept with zero occurrences
of full existential qualification~$\exists R. C$.
Therefore, the problem of \ALC{} concept satisfiability
parameterized by the number of occurrences
of full existential qualification~$\exists R. C$ in~$C$
is already \NP-hard for the parameter value~$k = 0$.
From this, it follows that the parameterized problem
is \para{\NP}-hard \cite{FlumGrohe03}.

To show membership in \para{\NP}, we modify the tableau
algorithm for \ALC{}, similarly to the way we did in the proof
of Theorem~\ref{thm:alc-para-conp}.
In particular, we describe an algorithm that can be implemented
by an alternating Turing machine that only makes use of existential
nondeterminism and that runs in fixed-parameter tractable time.
We do so by executing the $\exists$-rule deterministically, instead
of using universal nondeterminism.
That is, instead of using universal nondeterminism to choose which
instantiation of the condition of the $\exists$-rule to use, we iterate
over all possibilities deterministically.

Let~$k$ be the number of occurrences of full existential
quantification~$\exists R. C$ in~$C$.
At each point, there are at most~$k$ different ways of instantiating
the $\exists$-rule.
Moreover, after having applied the $\exists$-rule for at most~$k$
times, the $\exists$-rule is not applicable anymore.
Therefore, the total number of possible choices to iterate over
is at most~$k^k$.
Therefore, this modification of the algorithm can be implemented by
an alternating Turing machine that only uses existential nondeterminism
and that runs in time~$k^k \cdot |C|^{O(1)}$.
In other words, the problem is in \para{\NP},
and thus is \para{\NP}-complete.
\end{proof}

\begin{theorem}
\label{thm:alc-fpt}
Concept satisfiability for the logic \ALC{},
parameterized by both (i)~the number of occurrences
of the union operator~$\sqcup$ in~$C$
and (ii)~the number of occurrences
of full existential qualification~$\exists R. C$ in~$C$,
is fixed-parameter tractable.
\end{theorem}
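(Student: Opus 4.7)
The plan is to combine the two modifications of the tableau algorithm used in the proofs of Theorems~\ref{thm:alc-para-conp} and~\ref{thm:alc-para-np}, removing \emph{both} sources of nondeterminism simultaneously. Let $k_1$ be the number of occurrences of~$\sqcup$ in~$C$ and let $k_2$ be the number of occurrences of full existential quantification~$\exists R. C'$ in~$C$; we take the parameter to be $k = k_1 + k_2$.

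First, I would run the same tableau algorithm described for \ALC{}, but replace the existential nondeterminism of the $\sqcup$-rule by a deterministic iteration over its two successor ABoxes~$\AAA'$ and~$\AAA''$ (as in the proof of Theorem~\ref{thm:alc-para-conp}), and simultaneously replace the universal nondeterminism of the $\exists$-rule by a deterministic iteration over its possible instantiations (as in the proof of Theorem~\ref{thm:alc-para-np}). For the former, the concept obtained by an existential witness is a syntactic subconcept of~$C$, so every $\sqcup$-rule application corresponds to an occurrence of~$\sqcup$ in~$C$ and is applied at most once per such occurrence, giving at most~$2^{k_1}$ branching points along any root-to-leaf path. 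For the latter, the analysis from the proof of Theorem~\ref{thm:alc-para-np} gives at most $k_2^{k_2}$ combined choices for the $\exists$-rule.

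Putting these together, the entire search tree has at most $2^{k_1} \cdot k_2^{k_2}$ leaves, and each root-to-leaf path can be processed in polynomial time in~$|C|$ (since the deterministic rules maintain an ABox of polynomial size between $\exists$-rule applications, and there are at most~$k_2$ such applications on any path). The algorithm accepts the input concept if and only if at least one of the leaves corresponds to an open tableau, which is decided deterministically by the iteration. This yields a deterministic running time of~$2^{k_1} \cdot k_2^{k_2} \cdot |C|^{O(1)}$, which is fixed-parameter tractable in $k = k_1 + k_2$.

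The main obstacle is essentially bookkeeping: one has to argue carefully that the two modifications are compatible—in particular, that the bound of~$k_2^{k_2}$ on $\exists$-rule choices from Theorem~\ref{thm:alc-para-np} is not affected by the fact that the $\sqcup$-rule is now applied along an exponentially larger deterministic branching, and conversely that the ``at most once per occurrence'' bound on the $\sqcup$-rule still holds when the $\exists$-rule has been derandomised. Both are immediate from inspection of the transformation rules in Table~\ref{table:tableau-rules}, since neither derandomisation creates any new syntactic occurrences of~$\sqcup$ or~$\exists R.C'$. Correctness is inherited directly from the correctness of the underlying tableau algorithm, so no new semantic argument is required.
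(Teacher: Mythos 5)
Your proposal follows essentially the same route as the paper: the paper's own argument is a short sketch that says to derandomise both the $\sqcup$-rule and the $\exists$-rule simultaneously, combining the ideas of Theorems~\ref{thm:alc-para-conp} and~\ref{thm:alc-para-np}, and explicitly omits the details. You have simply filled in those details (the $2^{k_1}\cdot k_2^{k_2}$ bound on the search tree and the compatibility of the two derandomisations), which is consistent with the bounds established in the two preceding proofs.
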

\begin{proof}[Proof (sketch)]
We can modify the alternating polynomial-time
tableau algorithm for \ALC{} concept satisfiability
to work in deterministic fpt-time
by implementing both the $\sqcup$-rule and
the $\exists$-rule deterministically, iterating sequentially
over all possible choices that can be made for these rules.
That is, we combine the ideas behind the proofs of
Theorems~\ref{thm:alc-para-conp} and~\ref{thm:alc-para-np}.
We omit the details of this fpt-time algorithm.
\end{proof}
%

\subsection{Interpretation of the Results}

The results in this section are summarized
in Table~\ref{table:alc-concept-sat}.
Similarly as for the first case study,
the results for the second case study
show that parameterized complexity theory
can make distinctions that classical complexity theory does not see.
The problems studied in Theorems~\ref{thm:alc-para-conp},%
~\ref{thm:alc-para-np} and~\ref{thm:alc-fpt} are all \PSPACE{}-complete
classically, yet from a parameterized point of view their
complexity goes down to \para{\NP}, \para{\co{\NP}} and \FPT{}.
The \para{\NP}- and \para{\co{\NP}}-completeness results
of Theorems~\ref{thm:alc-para-conp} and~\ref{thm:alc-para-np}
also yield algorithms that (1)~firstly use an fpt-encoding
to an instance of SAT and (2)~then use a SAT solver to decide
the problem (see, e.g.,~\citealt{BiereHeuleMaarenWalsh09}).

\begin{table}[!h]
  \centering
  
  \begin{small}
  \begin{tabular}{@{\ \ } p{3.5cm} @{\quad}|@{\quad}p{3.8cm} @{\ \ } } \toprule
      \multirow{2}{*}{\textit{parameter}} & \textit{complexity of \ALC{}} \\
      & \textit{concept satisfiability} \\
      \midrule
      \hspace{0.5pt}-- & \PSPACE{}-c \hfill (Proposition~\ref{prop:alc-pspace}) \\[3pt]
      \# of occurrences of~$\sqcup$ & \para{\co{\NP}}-c \hfill (Theorem~\ref{thm:alc-para-conp}) \\[3pt]
      \# of occurrences of~$\exists$ & \para{\NP}-c \hfill (Theorem~\ref{thm:alc-para-np}) \\[3pt]
      \# of occurrences of~$\sqcup$ and~$\exists$ & FPT \hfill (Theorem~\ref{thm:alc-fpt}) \\[3pt]
    \bottomrule
  \end{tabular}
  \end{small}
  
  \caption{The parameterized complexity of \ALC{} concept satisfiability
    (with no TBoxes) for different parameters.}
  \label{table:alc-concept-sat}
\end{table}

\section{Case Study 3: A Parameterized Complexity View on Data Complexity}

%
%

In this section, we provide our third case study illustrating
the use of parameterized complexity for the analysis
of description logic reasoning.
This third case study is about refining the complexity analysis for
cases where one part of the input is much smaller than another part.
Typically, these cases occur where there is a small TBox and a small
query, but where there is a large database of facts (in the form of an ABox).
What is often done is that the size of the TBox and the query are seen
as fixed constants---and the complexity results are grouped under the
name of ``data complexity.''

In this section, we will look at two concrete polynomial-time data complexity
results for the description logic \ELI{}.
Even though the data complexity view gives the same
outlook on the complexity of these problems,
we will use the viewpoint of parameterized complexity theory to argue
that these two problems in fact have a different complexity.
One of these problems is more efficiently solvable than the other.

We chose the example of \ELI{} to illustrate our point
because it is technically straightforward.
More intricate fixed-parameter tractability results
for conjunctive query answering in description logics
have been obtained in the literature
\cite{BienvenuKikotKontchakovPodolskiiRyzhikovZakharyaschev17,%
BienvenuKikotKontchakovRyzhikovZakharyaschev17,%
KikotKontchakovZakharyaschev11}.

We begin by reviewing the description logic \ELI{},
and the two reasoning problems for this logic that we will
look at (instance checking and conjunctive query entailment).
We will review the classical complexity results for these two problems,
including the data complexity results.
We will then use results from the literature to
give a parameterized complexity analysis for these two
problems, and argue why the parameterized complexity perspective gives
a more accurate view on the complexity of these problems.

\subsection{The Description Logic \ELI{}}

To define the logic \ELI{}, we first consider the logic \EL{}.
The description logic \EL{} is obtained from the logic \ALC{} by forbidding
any use of the negation operator ($\neg$),
the empty concept ($\bot$),
the union operator ($\sqcup$),
and universal quantification ($\forall R. C$).
%
The description logic \ELI{} is obtained from the logic \EL{} by
introducing \emph{inverse roles}.
That is, \ELI{} concepts are defined by the following grammar
in Backus-Naur form, for~$R \in N_R$ and~$A \in N_C$:
\[ C := A\ |\ \top\ |\ C \sqcap C\ |\ \exists R. C\ |\ \exists R^{-}. C. \]
Interpretations~$\III = (\Delta^{\III},\cdot^{\III})$ for \ELI{} 
are defined as interpretations for \EL{}
with the following addition:
\begin{itemize}
  \item $(\exists R^{-}. C)^{\III} = \SB x \in \Delta^{\III} \SM$
    there exists some~$y \in C^{\III}$ such that $(y,x) \in R^{\III}$ $\SE$.
\end{itemize}

\subsection{Classical Complexity Results}

We consider two reasoning problems for the logic \ELI{}.
The first problem that we consider is the problem of
\emph{instance checking}.
In this problem, the input consists of an ABox~$\AAA$,
a (general) TBox~$\TTT$, an individual name~$a$ and a concept~$C$,
and the question is whether~$\AAA,\TTT \models C(a)$---%
that is, whether for each interpretation~$\III$ such that~$\III \models \AAA$
and~$\III \models \TTT$ it holds that~$\III \models C(a)$.

The second problem that we consider is the problem
of \emph{conjunctive query entailment} (which can be seen as a
generalization of the problem of instance checking).
A \emph{conjunctive query} is a set~$q$ of atoms
of the form~$C(v)$ and~$R(u,v)$,
where~$C$ is a concept, where~$R \in N_R$,
and where~$u,v$ are \emph{variables}.
Let~$\Var{q}$ denote the set of variables occurring in~$q$.
Let~$\III = (\Delta^{\III},\cdot^{\III})$ be an interpretation
and let~$\pi$ be a mapping from~$\Var{q}$ to~$\Delta^{\III}$.
We write~$\III \models^{\pi} C(v)$ if~$\pi(v) \in C^{\III}$,
we write~$\III \models^{\pi} R(u,v)$ if~$(\pi(u),\pi(v)) \in R^{\III}$,
we write~$\III \models^{\pi} q$ if~$\III \models^{\pi} \alpha$
for all~$\alpha \in q$, and we
write~$\III \models q$ if~$\III \models^{\pi} q$
for some~$\pi : \Var{q} \rightarrow \Delta^{\III}$.
For any ABox~$\AAA$ and TBox~$\TTT$, we write~$\AAA,\TTT \models q$
if~$\III \models q$ for each interpretation~$\III$ such
that~$\III \models \AAA$ and~$\III \models \TTT$.
In the problem of conjunctive query entailment,
the input consists of a (general) TBox~$\TTT$,
an ABox~$\AAA$, and a conjunctive query~$q$,
and the question is to decide whether~$\AAA,\TTT \models q$.

Both the problem of instance checking and the problem
of conjunctive query entailment for \ELI{}
are \EXPTIME{}-complete in general.

\begin{proposition}[\citealt{BaaderBrandtLutz05};~\citeyear{BaaderBrandtLutz08}]
\label{prop:eli-ic-exp}
Instance checking for \ELI{} is \EXPTIME{}-complete.
\end{proposition}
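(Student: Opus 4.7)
My plan has two parts: show membership in \EXPTIME{} via an exponential-time saturation algorithm that builds a canonical model, and show \EXPTIME{}-hardness by reducing from acceptance for an alternating polynomial-space Turing machine (which is \EXPTIME{}-complete).

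For the upper bound, I would exploit that \ELI{} is a Horn fragment, so every satisfiable knowledge base $(\AAA,\TTT)$ has a canonical (universal) model~$\III^{*}$ with the property that $\AAA,\TTT \models C(a)$ iff $a^{\III^{*}} \in C^{\III^{*}}$. The plan is to build (a compact representation of) $\III^{*}$ by saturation: start from the named individuals and assertions in~$\AAA$, and repeatedly apply completion rules derived from~$\TTT$, adding $D(x)$ whenever some GCI forces it and introducing a single anonymous successor witnessing each required $\exists R. D$. Because of inverse roles, concept memberships must also be propagated backwards along roles, and this backward propagation can cause further saturation at named individuals; this is precisely what makes \ELI{} harder than plain \EL{}. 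Since the relevant "types" of individuals are subsets of the (polynomially many) subconcepts appearing in the input and in~$\TTT$, there are at most exponentially many such types, so the saturation terminates after exponentially many steps, each step being polynomial-time. Membership $a \in C$ can then be read off the saturated structure.

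For hardness, I would reduce from the word problem for alternating polynomial-space Turing machines, encoding configurations as elements in an exponentially large tree-like model forced by a small TBox and a small ABox. Atomic concepts would mark the tape contents, head position, state, and accept/reject status of a configuration; roles would link a configuration to its successors; and a single ABox assertion would fix the initial configuration. The main obstacle is that \ELI{} has no disjunction, no negation, and no universal quantifier, so simulating universal branches of the alternation is not straightforward. The standard way around this, which I would follow rather than reinvent, exploits inverse roles: GCIs of the form $\exists R^{-}. A \sqsubseteq B$ propagate information from a successor configuration back to its predecessor, so that a universal configuration can be labelled "accept" exactly when all of its (existentially generated) children can be labelled "accept". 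Combined with $\exists R. \top$ axioms that force every required successor to actually exist, this simulates the "all children accept" semantics of universal states. Arguing that the resulting $(\AAA,\TTT)$ satisfies $\AAA,\TTT \models \mtext{Accept}(a_{0})$ iff the machine accepts the input is the technical core; I would lean on the construction of \citet{BaaderBrandtLutz05} for the precise gadgets.
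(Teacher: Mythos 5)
The paper does not prove this proposition at all---it imports it from Baader, Brandt and Lutz (2005; 2008)---so there is no internal proof to compare against; I can only assess your sketch on its own merits. Your upper bound is fine in outline: \ELI{} is Horn, every \ELI{} knowledge base is satisfiable (there is no $\bot$ or negation), the canonical model characterizes entailment of \ELI{} concepts, and a type-based saturation over the at most exponentially many subsets of subconcepts yields an \EXPTIME{} procedure. The architecture of the lower bound (simulate an alternating polynomial-space machine, whose acceptance problem is \EXPTIME{}-complete, by an exponentially deep configuration tree forced by a small TBox) is also the right one.

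The gap is in the one concrete technical claim you make about why inverse roles create hardness, and it is wrong in a way that matters. First, the direction is reversed: if the role~$R$ points from a configuration to its successor, then $\exists R^{-}.A \sqsubseteq B$ says that any element with an $R$-\emph{predecessor} in~$A$ belongs to~$B$, i.e., it pushes information \emph{forward} from a configuration to all of its successors; it is exactly the \ELI{} rendering of the value restriction $A \sqsubseteq \forall R.B$. Second, the mechanism you identify as the crux---labelling a universal configuration ``accept'' once all of its children are labelled ``accept''---needs no inverse roles at all: with one role~$R_i$ per transition it is the plain \EL{} axiom $\exists R_1.\mtext{Accept} \sqcap \dotsb \sqcap \exists R_m.\mtext{Accept} \sqsubseteq \mtext{Accept}$, and upward propagation of this kind is precisely what the polynomial-time \EL{} completion algorithm already handles. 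If that were the only obstacle, \EL{} with GCIs would be \EXPTIME{}-hard, contradicting its known tractability. The actual role of inverse roles in the hardness proof is the forward propagation: copying the tape contents (and other bookkeeping) of a configuration to \emph{all} of its successor configurations requires simulated value restrictions $\exists R^{-}.A \sqsubseteq B$, which is what lifts the logic out of the Horn-\EL{} polynomial regime. You would need to rebuild the reduction around that mechanism (or, equivalently, reduce from subsumption w.r.t.\ GCIs in a value-restriction logic such as $\mathcal{FL}_0$) for the hardness argument to go through.
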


\begin{corollary}[\citealt{BaaderBrandtLutz05};~\citeyear{BaaderBrandtLutz08}]
\label{cor:eli-cqe-exp}
Conjunctive query entailment for \ELI{} is \EXPTIME{}-complete.
\end{corollary}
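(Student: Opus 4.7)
The plan is to derive both directions of the \EXPTIME{}-completeness claim from Proposition~\ref{prop:eli-ic-exp}.

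For the lower bound, I would reduce \ELI{} instance checking to \ELI{} conjunctive query entailment. Given an instance-checking instance $(\AAA,\TTT,C,a)$, pick a fresh role $R_a \in N_R$, form the ABox $\AAA' = \AAA \cup \SBs R_a(a,a) \SEs$, keep $\TTT$ unchanged, and take the Boolean conjunctive query $q = \SBs R_a(v,v),\, C(v) \SEs$. The key point is that $R_a$ occurs neither in $\AAA$ nor in $\TTT$, so every model $\III$ of $\AAA,\TTT$ extends to a model of $\AAA',\TTT$ in which $R_a^{\III} = \SBs (a^{\III},a^{\III}) \SEs$; in any such minimal-$R_a$ model the atom $R_a(v,v)$ forces $\pi(v) = a^{\III}$. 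From this it follows that $\AAA,\TTT \models C(a)$ iff $\AAA',\TTT \models q$. Since the reduction is polynomial, \EXPTIME{}-hardness of conjunctive query entailment follows from Proposition~\ref{prop:eli-ic-exp}.

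For the upper bound, I would appeal to the standard canonical (universal) model construction for \ELI{} knowledge bases: for every $(\AAA,\TTT)$ there is a universal model $\III_{\AAA,\TTT}$ such that $\AAA,\TTT \models q$ iff $\III_{\AAA,\TTT} \models q$, and for query matching it suffices to inspect a finite rooted fragment of this model whose size is bounded by $|\AAA| \cdot 2^{O(|\TTT|)}$. Homomorphism checking for a query of size $|q|$ into such a structure of size $M$ can be carried out in time $M^{O(|q|)}$, which remains singly exponential in the total input size, placing the problem in \EXPTIME{}.

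The main obstacle is the upper-bound argument: one has to show carefully that only a finite, exponentially bounded portion of the (possibly infinite) canonical model needs to be examined for query matching, exploiting the tree-like shape of its anonymous part and bounding the role depth that any homomorphic image of $q$ can reach. The lower bound is comparatively routine once one has the $R_a$-trick that pins the query variable down to $a^{\III}$ by means of a fresh role asserted only at the distinguished individual.
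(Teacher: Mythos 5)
The paper does not actually prove this statement: it is imported from the literature as a corollary of Proposition~\ref{prop:eli-ic-exp}, on the grounds that conjunctive query entailment ``can be seen as a generalization'' of instance checking. So any self-contained argument is necessarily a different route from the paper's. Your lower bound is correct and even repairs a small imprecision in that framing: under the paper's definition a conjunctive query contains only variables and no individual names, so instance checking is \emph{not} literally a syntactic special case, and a gadget like your fresh role $R_a$ with $R_a(a,a)$ added to the ABox and $q = \SBs R_a(v,v),\, C(v) \SEs$ is genuinely needed to pin $\pi(v)$ to $a^{\III}$. Your justification is sound: since $R_a$ occurs nowhere in $\AAA$, $\TTT$ or $C$, any countermodel of $C(a)$ extends to one interpreting $R_a$ as exactly $\SBs (a^{\III},a^{\III}) \SEs$ without changing $C^{\III}$, and conversely any match of $q$ must send $v$ to $a^{\III}$ in that model.

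The gap is in the upper bound. The claim that a rooted fragment of the canonical model of size $|\AAA| \cdot 2^{O(|\TTT|)}$ suffices for query matching cannot be right as stated: the depth of the fragment must depend on $|q|$, since a connected component of $q$ can be mapped $|q|$ levels deep into an anonymous subtree; and if you instead appeal to type repetition to cut off deep matches, the first repetition along a branch may occur only at depth $2^{\Theta(|\TTT|)}$, so materializing the canonical model up to that depth produces a structure of doubly exponential size, and running an $M^{O(|q|)}$ homomorphism check on it overshoots \EXPTIME{}. To stay singly exponential one needs the finer argument that is the real content of the cited theorem: split $q$ into connected components, check those whose image avoids the ABox against the depth-$|q|$ tree generated by each of the $2^{O(|\TTT|)}$ reachable types (rather than against a single deep materialization), check the remaining components against the ABox part padded with depth-$|q|$ trees, and handle the forks that inverse roles allow when two query variables meet at the same anonymous element. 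So your plan points in the right direction and you correctly identify the upper bound as the main obstacle, but the quantitative step you wrote down is exactly the one that would fail.
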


The results of Propositions~\ref{prop:eli-ic-exp}
and Corollary~\ref{cor:eli-cqe-exp} are typically called ``combined complexity'' results---%
meaning that all elements of the problem statement are given as
inputs for the problem.
To study how the complexity of these problems increases when the size of
the ABox~$\AAA$ grows---and when the size of the TBox~$\TTT$ and the
size of the query~$q$ remain the same---often different variants of the
problems are studied.
In these variants, the TBox~$\TTT$ and the query~$q$ are fixed
(and thus not part of the problem input), and only the ABox~$\AAA$ is given
as problem input.
That is, there is a variant of the problem for each choice of~$\TTT$ and~$q$.
The computational complexity of these problem variants are typically
called the ``data complexity'' of the problem.

From a data complexity perspective,
the problems of instance checking and conjunctive query entailment
for the logic \ELI{} are both polynomial-time solvable.
In other words, from a data complexity point of view
these problems are of the same complexity.

\begin{claim}[\citealt{Krisnadhi07}]
\label{claim:eli-ic-p}
Instance checking for \ELI{} is polynomial-time solvable
regarding data complexity.
\end{claim}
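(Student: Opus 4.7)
The plan is to exhibit an algorithm that, for any fixed TBox $\TTT$, fixed concept $C$, and fixed individual name $a$, decides $\AAA,\TTT \models C(a)$ in time polynomial in~$\Card{\AAA}$. The key leverage is that, because $\TTT$ and $C$ are fixed, the set $\Sub{\TTT} \cup \SBs C \SEs$ of relevant subconcepts has constant size, so at most a constant amount of information needs to be tracked per individual.

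First I would appeal to the canonical-model property for \ELI{}: for every consistent $\AAA \cup \TTT$ there exists a universal model $\III^{*}$ such that, for every concept $D$ and individual $b$, $\AAA,\TTT \models D(b)$ if and only if $b^{\III^{*}} \in D^{\III^{*}}$. It therefore suffices to compute enough of $\III^{*}$ to decide whether $a^{\III^{*}} \in C^{\III^{*}}$.

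Second, I would build $\III^{*}$ in two pieces. The named part has as its domain the individuals occurring in $\AAA$, with role extension given by $\AAA$. The anonymous part consists of a finite forest of witnesses for the existentials in $\TTT$; because $\TTT$ is fixed, this forest depends only on $\TTT$ and can be precomputed in constant time. The two pieces are glued together by attaching copies of the appropriate subtrees at each named individual to witness each $\exists R.D$ from $\TTT$ that is triggered. The extension of each relevant $D \in \Sub{\TTT} \cup \SBs C \SEs$ at each named individual is then computed by a saturation loop that propagates memberships through the GCIs of $\TTT$ and through the role edges of $\AAA$, in both directions on account of inverse roles. Labels only grow and are of constant size, so the loop reaches a fixed point after polynomially many updates, after which we answer ``yes'' iff $C$ ends up in the label of $a$.

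The main obstacle is justifying soundness and completeness of this propagation in the presence of inverse roles: labels can flow in both directions along ABox edges and travel around cycles of role assertions, so one must argue that the fixed point of the propagation faithfully reproduces the labels of $\III^{*}$ on the named part. Once the correctness of the canonical-model construction for \ELI{} is in hand---a standard but technically nontrivial fact---the polynomial-time bound follows immediately from the constant bound on per-individual label sizes and the polynomial bounds on the number of individuals and assertions in $\AAA$.
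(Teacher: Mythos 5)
The paper does not actually prove this claim---it is imported as a citation from \citet{Krisnadhi07}, and the only proof-relevant content in the paper is the remark (in the proof of Observation~\ref{obs:eli-ic-fpt}) that Krisnadhi's algorithm runs in time~$2^{|\TTT|^{O(1)}} \cdot |\AAA|^{O(1)}$, which is polynomial in~$\Card{\AAA}$ once~$\TTT$ and the query concept are fixed. Your canonical-model-plus-saturation sketch is essentially the standard argument underlying that cited result, so you are reconstructing the external proof rather than diverging from the paper; the plan is sound. Two points deserve more care than you give them. First, the tree-shaped universal model of an \ELI{} knowledge base is in general infinite; what is finite (and of constant size for fixed~$\TTT$) is the \emph{compressed} canonical model whose anonymous elements are types over the subconcepts of~$\TTT$. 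That compression is harmless for instance checking, which only asks about concept memberships of named individuals, but it is precisely the step that breaks for conjunctive queries---which is why the paper's Observations~\ref{obs:eli-ic-fpt} and~\ref{obs:eli-cqe-xp} separate the two problems. Second, your two-phase plan (precompute the anonymous forest from~$\TTT$ alone, then saturate the named part) understates the interdependence created by inverse roles: which witness trees get attached at a named individual depends on its label, and that label in turn depends on what holds at those witnesses via concepts of the form~$\exists R^{-}.D$ in~$\TTT$. The standard repair is to precompute, for each of the constantly many types over the subconcepts of~$\TTT$, its closure under TBox consequences \emph{including} those mediated by anonymous successors, and then to use that closure function as one additional saturation rule on the named part; with that amendment the polynomial bound you claim does go through.
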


\begin{claim}[\citealt{KrisnadhiLutz07}]
\label{claim:eli-cqe-p}
Conjunctive query entailment for \ELI{} is polynomial-time solvable
regarding data complexity.
\end{claim}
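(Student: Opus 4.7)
The plan is to exploit the fact that both $\TTT$ and $q$ are fixed, so they contribute only constant factors, and to build a sufficiently small structure that correctly captures entailment of $q$ from $\AAA$ and $\TTT$. The natural approach is a canonical-model construction: build a model $\III^{*}$ of $(\AAA,\TTT)$ that homomorphically embeds into every other model of $(\AAA,\TTT)$, so that $\AAA,\TTT \models q$ iff $\III^{*} \models q$, and then evaluate $q$ on $\III^{*}$ directly.

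First I would compute the named part of the canonical model by saturating $\AAA$: for every individual $a$ and every relevant concept $C$ built from symbols occurring in $\TTT$ or $q$, decide whether $\AAA,\TTT \models C(a)$ using the instance-checking procedure guaranteed by Claim~\ref{claim:eli-ic-p}. Since $\TTT$ and $q$ are fixed, only a constant number of such concepts need to be considered per individual, so this step runs in polynomial time in $|\AAA|$. Second I would attach the anonymous part: the Horn-like nature of \ELI{} means that every existential requirement~$\exists R.C$ (or~$\exists R^{-}.C$) on a node can be satisfied by a canonical witness whose subsequent subtree depends only on $\TTT$. Hence the anonymous subtree rooted at any anonymous node is independent of $|\AAA|$ and can be described by constantly many $\TTT$-dependent templates. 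Truncating these subtrees at depth~$|q|$ yields a structure $\III^{*}$ of size polynomial in $|\AAA|$, with constants depending on $|\TTT|$ and $|q|$. Third I would evaluate $q$ on $\III^{*}$ by brute force over all assignments $\pi : \Var{q} \rightarrow \Delta^{\III^{*}}$; since $|\Var{q}|$ is constant, this is polynomial in $|\AAA|$.

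Correctness of checking $q$ on $\III^{*}$ rather than on $(\AAA,\TTT)$ itself would follow from the universal-model property together with the observation that any match of a conjunctive query of size~$|q|$ involves at most $|q|$ elements and hence only the portion of the canonical model up to depth~$|q|$ around the ABox named part. The main obstacle is the handling of inverse roles: unlike in pure \EL{}, concept memberships in \ELI{} can propagate upward from anonymous successors back to their predecessors and further into the named part, so the saturation of the ABox and the templates for anonymous subtrees interact. I would therefore need to argue that a single fixpoint saturation---carried out on the named part using Claim~\ref{claim:eli-ic-p} and on the constantly many anonymous $\TTT$-types---already captures all such propagations, and that the resulting truncated structure remains universal for matches of depth at most $|q|$. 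Making this part precise, rather than the bookkeeping of the polynomial-time bounds, is the technically delicate step, and here I would lean on the canonical-model machinery developed in the literature on conjunctive query answering for \ELI{}.
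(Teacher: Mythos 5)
The paper does not actually prove this statement: it is imported as a cited black box from \citet{KrisnadhiLutz07}, and the only information the paper records about the underlying algorithm is its running time $(|\AAA|+|\TTT|)^{|q|^{O(1)}}$ in Observation~\ref{obs:eli-cqe-xp}. Your canonical-model sketch is the standard route to that result, and it is consistent with everything the paper says: your final brute-force evaluation of $q$ over all maps $\pi:\Var{q}\rightarrow\Delta^{\III^{*}}$ produces a polynomial whose degree grows with $|\Var{q}|$, which is exactly the ``in \XP{} but \W{1}-hard'' behaviour of Observation~\ref{obs:eli-cqe-xp} and Corollary~\ref{cor:eli-cqe-w1-hard}; a $|q|$-independent exponent is not to be expected unless $\FPT=\W{1}$. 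Two caveats on the sketch. First, truncating the anonymous trees at depth $|q|$ below the named part is slightly too aggressive as stated: a connected component of a Boolean query can be matched entirely inside an anonymous tree, arbitrarily far from every named individual, and such matches would be lost. The repair is the regularity of the anonymous part---subtrees rooted at nodes of the same $\TTT$-type are isomorphic, so it suffices to keep one representative subtree per realized type, truncated at depth $|q|$; the additional depth needed to reach a first representative of each type depends only on $\TTT$ and is a constant for data complexity. Second, the step you explicitly defer---that a single fixpoint saturation of the named part together with the constantly many $\TTT$-type templates absorbs all upward propagation induced by inverse roles, and that the truncated structure remains universal for $|q|$-element matches---is precisely where the substance of \citet{KrisnadhiLutz07} lies, so what you have is a correct outline that leans on the cited machinery rather than a self-contained proof. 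Given that the paper itself treats the claim the same way, that is a defensible position, but the inverse-role saturation argument would have to be spelled out for the proof to stand on its own.
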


\subsection{Parameterized Complexity Results}

We will argue that the computational complexity of the problems
of instance checking and conjunctive query entailment for \ELI{}---%
when only the ABox~$\AAA$ grows in size---%
is of a vastly different nature.
We will do so by using the parameterized complexity methodology.
Concretely, we will take (the size of) the TBox~$\TTT$
and (for the case of conjunctive query entailment) the query~$q$
as parameters, and observe that the parameterized
complexity of these two problems is different.

We begin by observing that the algorithm
witnessing polynomial-time data complexity
for the problem of instance checking for \ELI{}
corresponds to an fpt-algorithm for the problem
when parameterized by the size of the TBox~$\TTT$.

\begin{observation} 
\label{obs:eli-ic-fpt}
Instance checking for \ELI{} is fixed-parameter tractable
when parameterized by~$|\TTT|$.
\end{observation}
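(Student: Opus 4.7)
The plan is to leverage Claim~\ref{claim:eli-ic-p}: since the data complexity of instance checking for \ELI{} is polynomial, what remains is to verify that the polynomial-time algorithm (for fixed $\TTT$) actually scales as $f(|\TTT|) \cdot |\AAA|^{O(1)}$ with an exponent that does \emph{not} depend on $|\TTT|$. If so, this is precisely an fpt-algorithm for the parameter $|\TTT|$.

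First, I would recall the canonical (universal) model construction for \ELI{} given $\AAA$ and $\TTT$. Starting from the named individuals in $\AAA$, one adds anonymous successors to witness existential restrictions forced by GCIs in $\TTT$. Standard constructions show that the set of distinct ``anonymous types'' needed is determined entirely by the syntactic subconcepts of $\TTT$, so there are only $2^{O(|\TTT|)}$ such types, and the whole universal model $\III_{\AAA,\TTT}$ has size at most $2^{O(|\TTT|)} \cdot |\AAA|$.

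Next, given a query concept $C$ and an individual $a$, I would check whether $a^{\III_{\AAA,\TTT}} \in C^{\III_{\AAA,\TTT}}$ by a standard bottom-up labelling: iterate over the subconcepts of $C$ in order of increasing size and mark each element of $\III_{\AAA,\TTT}$ with the subconcepts it satisfies, using the inductive semantic clauses for $\sqcap$, $\exists R.\, \cdot$, and $\exists R^{-}.\, \cdot$. Each pass is linear in $|\III_{\AAA,\TTT}| \cdot |C|$, and there are at most $|C|$ passes, giving total time polynomial in $|\III_{\AAA,\TTT}|$ with a degree independent of both $|\TTT|$ and $|C|$. Combining with the model size bound yields a running time of $f(|\TTT|,|C|) \cdot |\AAA|^{O(1)}$ where the exponent is a small universal constant.

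The main obstacle is really the bookkeeping step of checking that the usual data-complexity algorithm from the literature indeed separates cleanly into an $f(|\TTT|)$ factor (coming from the anonymous part of the canonical model and from iterating over subconcepts of $\TTT$) times a polynomial in $|\AAA|$ of fixed degree---rather than, say, an $|\AAA|^{|\TTT|}$ dependence that would only witness membership in \XP{}. Once that separation is made explicit (either via the canonical model route sketched above or by inspecting the consequence-based algorithm underlying Claim~\ref{claim:eli-ic-p}), fpt-membership follows immediately.
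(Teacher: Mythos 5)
Your argument is correct and matches the substance of the paper's proof, which simply cites Krisnadhi's Proposition~4.3 and observes that the algorithm given there already runs in time $2^{|\TTT|^{O(1)}} \cdot |\AAA|^{O(1)}$, i.e., in fpt-time for the parameter $|\TTT|$. Your canonical-model reconstruction spells out exactly the separation the paper reads off from that cited bound---exponential dependence confined to $|\TTT|$ (and only polynomial dependence on $|C|$ and $|\AAA|$, with fixed degree)---so it is the same approach, merely made self-contained rather than delegated to the reference.
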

\begin{proof}
The algorithm to solve the problem of instance checking for \ELI{}
described by Krisnadhi~\shortcite[Proposition~4.3]{Krisnadhi07}
runs in time~$2^{|\TTT|^{O(1)}} \cdot |\AAA|^{O(1)}$.
\end{proof}

The polynomial-time data complexity algorithm for the problem
of conjunctive query entailment, on the other hand,
does not translate to an fpt-algorithm,
but to an xp-algorithm instead---%
when the parameter is (the sum of) the size of the
TBox~$\TTT$ and the size of the query~$q$.

\begin{observation} 
\label{obs:eli-cqe-xp}
Conjunctive query entailment for \ELI{} is in \XP{}
when parameterized by~$|\TTT|$ and~$|q|$.
\end{observation}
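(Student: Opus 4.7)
The plan is to take the polynomial-time data complexity algorithm underlying Claim~\ref{claim:eli-cqe-p} (due to Krisnadhi and Lutz) and verify that its running time, viewed as a function of the whole input rather than just of~$|\AAA|$, fits the~$n^{f(k)}$ shape required for membership in \XP{} with~$k = |\TTT| + |q|$.

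Concretely, I would unpack that algorithm as two phases. In phase~1 one constructs a finite representation of the universal model of~$\AAA$ with respect to~$\TTT$; because \ELI{} admits a tree-shaped universal model, this representation has domain of size at most~$|\AAA|^{O(1)} \cdot h(|\TTT|)$ and can be built in time of the same order. In phase~2 one tests whether~$q$ admits a homomorphism into this representation; brute-force enumeration of all mappings~$\pi : \Var{q} \rightarrow D$, each checked in polynomial time, takes~$|D|^{|q|} \cdot |q|^{O(1)}$ steps. Combining both phases yields a total running time bounded by~$|\AAA|^{O(|q|)} \cdot g(|\TTT|,|q|)$, which is of the form~$n^{f(|\TTT|+|q|)}$ for a computable function~$f$---exactly the form of an xp-algorithm.

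The main bookkeeping obstacle is confirming that the universal-model representation implicit in the proof of Claim~\ref{claim:eli-cqe-p} really has size polynomial in~$|\AAA|$ with the \emph{degree} of that polynomial bounded independently of~$|q|$, so that the exponent of~$|\AAA|$ in the final bound depends only on the parameter and not on~$|\AAA|$ itself. For \ELI{} this is standard, but it needs to be extracted explicitly from the cited data complexity argument. It is worth pointing out that one should not hope for an fpt-algorithm to fall out of this approach: even over plain relational databases, conjunctive query evaluation is \W{1}-hard parameterized by~$|q|$, so the~$|\AAA|^{|q|}$ blow-up in phase~2 is, in a precise sense, unavoidable.
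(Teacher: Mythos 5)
Your proposal is correct and takes essentially the same route as the paper: both simply observe that the Krisnadhi--Lutz data-complexity algorithm, when its running time is measured against the full input, is already an xp-algorithm for the parameter $|\TTT|+|q|$ (the paper cites the bound $(|\AAA|+|\TTT|)^{|q|^{O(1)}}$ directly and stops there). Your extra unpacking into a universal-model phase and a homomorphism-check phase is a reasonable elaboration of the same one-line observation, not a different argument.
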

\begin{proof}
The algorithm to solve the problem of conjunctive query entailment for \ELI{}
described by Krisnadhi and Lutz~\shortcite[Theorem~4]{KrisnadhiLutz07}
runs in time~$(|\AAA| + |\TTT|)^{|q|^{O(1)}}$.
\end{proof}

For this parameter, the problem of conjunctive
query entailment for \ELI{} is in fact \W{1}-hard---%
and thus not fixed-parameter tractable,
assuming the widely believed conjecture
that~$\FPT \neq \W{1}$.
This follows immediately from the
\W{1}-hardness of conjunctive query answering
over databases when parameterized by the
size of the query
\cite[Theorem~1]{PapadimitriouYannakakis99}.


\begin{corollary}[{\citealt{PapadimitriouYannakakis99}}]
\label{cor:eli-cqe-w1-hard}
Conjunctive query entailment for \ELI{} is \W{1}-hard
when parameterized by~$|\TTT|$ and~$|q|$.
\end{corollary}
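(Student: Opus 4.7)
The plan is to give an fpt-reduction from Boolean conjunctive query answering over relational databases---which is \W{1}-hard when parameterized by the query size, by \citet{PapadimitriouYannakakis99}---to conjunctive query entailment for \ELI{} parameterized by $|\TTT| + |q|$.

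First, I would recall that \W{1}-hardness of conjunctive query answering already holds for databases over a single binary relation: the classical reduction from \Clique{} takes a graph $G$ and parameter $k$ and produces the ``$k$-clique query'' $q_k = \SB R(x_i, x_j) \SM 1 \leq i < j \leq k \SE$ of size $O(k^2)$, evaluated against $G$ viewed as a database over one binary symbol $R$; $G$ contains a $k$-clique iff $q_k$ has an answer in $G$. Hence it suffices to reduce from this restricted form.

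Given such an instance $(D, q)$, I would construct an \ELI{} instance $(\AAA, \TTT, q')$ as follows. Set $\TTT = \emptyset$, let $\AAA$ consist of the role assertion $R(a,b)$ for every tuple $(a,b)$ in the binary relation of $D$ (treating database constants as individual names), and take $q' = q$. Each atom of $q$ has the form $R(u,v)$, which is a legal \ELI{} query atom, so $q'$ is a valid \ELI{} conjunctive query. The reduction is clearly polynomial-time computable, and the target parameter satisfies $|\TTT| + |q'| = 0 + |q|$, matching the source parameter exactly---so this is indeed an fpt-reduction.

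For correctness I must show that $\AAA, \TTT \models q'$ iff $q$ has an answer on $D$. Since $\TTT$ is empty, $\AAA$ admits a canonical (minimal) model $\III_{\AAA}$ whose domain is the set of individuals in $\AAA$ and whose role extensions are given exactly by the role assertions; by the standard canonical-model property of conjunctive query entailment in description logic (equivalently, by the fact that positive existential formulas are preserved under homomorphisms), $\AAA, \TTT \models q'$ holds iff $\III_{\AAA} \models q'$, which in turn is equivalent to the existence of a homomorphism from $q$ to $D$. This closes the argument. The only step that requires care---but is entirely standard in the description-logic literature---is invoking the canonical-model characterization of entailment for an empty TBox; the rest of the reduction is essentially syntactic.
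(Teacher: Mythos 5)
Your proposal is correct and follows essentially the same route as the paper, which simply observes that the claim ``follows immediately'' from the \W{1}-hardness of conjunctive query answering over databases parameterized by query size \citep{PapadimitriouYannakakis99}: you have merely made explicit the implicit reduction (database as ABox of role assertions, empty TBox, same query) and the canonical-model argument justifying that entailment coincides with query evaluation. The extra care you take---restricting to a single binary relation so the query is a legal \ELI{} query, and invoking preservation of conjunctive queries under homomorphisms---is exactly the right way to fill in the details the paper omits.
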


\begin{table}[!b]
  \centering
  
  \begin{small}
  \begin{tabular}{p{2.4cm} @{\ \ \ }|@{\ \ \ } p{2.1cm} @{\ \ \ }|@{\ \ \ } p{2.4cm}} \toprule
      & \textit{instance\newline checking}
      & \textit{conjunctive\newline query entailm.} \\
      \midrule\midrule
      combined\newline complexity &
        \EXPTIME{}-c \newline \phantom{a}\hfill (Prop~\ref{prop:eli-ic-exp}) &
        \EXPTIME{}-c \newline \phantom{a}\hfill (Cor~\ref{cor:eli-cqe-exp}) \\[5pt]
      \midrule
      data complexity &
        in \P{} \hfill (Claim~\ref{claim:eli-ic-p}) &
        in \P{} \hfill (Claim~\ref{claim:eli-cqe-p}) \\[3pt]
      \midrule
      combined\newline complexity with\newline parameter~$|\TTT|+|q|$ &
        in FPT \newline \phantom{a}\hfill (Obs~\ref{obs:eli-ic-fpt}) &
        in XP \hfill (Obs~\ref{obs:eli-cqe-xp}) \newline
        \W{1}-h \hfill (Cor~\ref{cor:eli-cqe-w1-hard}) \\[13pt]
    \bottomrule
  \end{tabular}
  \end{small}
  
  \caption{(Parameterized) complexity results for instance checking and
    conjunctive query entailment for \ELI{}.}
  \label{table:data-complexity}
\end{table}

\subsection{Interpretation of the Results}

The results in this section are summarized
in Table~\ref{table:data-complexity}.
Observation~\ref{obs:eli-ic-fpt}
and Corollary~\ref{cor:eli-cqe-w1-hard} show that parameterized
complexity can give a more accurate view on data complexity
results than classical complexity theory.
From a classical complexity perspective, the data complexity
variants of both problems are polynomial-time solvable,
whereas the parameterized data complexity variants of the
problems differ in complexity.
Both problems are solvable in polynomial
time when only the ABox~$\AAA$ grows in size.
However, for instance checking the order of the polynomial is constant
(Observation~\ref{obs:eli-ic-fpt}), and for conjunctive query entailment
the order of the polynomial grows with the size of the query~$q$
(Corollary~\ref{cor:eli-cqe-w1-hard}).
This is a difference with enormous effects on the practicality of
algorithms solving these problems
(see, e.g.,~\citealt{Downey12}).

\section{Directions for Future Research}

The results in this paper are merely an illustrative exposition
of the type of parameterized complexity results
that are possible for description logic reasoning problems when
using less commonly studied concepts
(e.g., the classes \para{\NP}, \para{\co{\NP}} and \para{\PSPACE}).
We hope that this paper sparks a structured investigation of the
parameterized complexity of different reasoning problems for the
wide range of description logics that have been studied.
For this, it would be interesting to consider a large assortment of
different parameters that could reasonably be expected to have
small values in applications. 
It would also be interesting to investigate to what extent, say,
\para{\NP}-membership results can be used to develop practical
algorithms based on the combination of fpt-time encodings into SAT
and SAT solving algorithms.

\section{Conclusion}

We showed how the complexity study of description logic
problems can benefit from using the framework of parameterized complexity
and all the tools and methods that it offers.
We did so using three case studies.
The first addressed the problem of
concept satisfiability for \ALC{} with respect to nearly acyclic TBoxes.
The second was about  the problem of
concept satisfiability for fragments of \ALC{} that are close
to \ALE{}, \ALU{} and \AL{}, respectively.
The third case study concerned
a parameterized complexity view on the notion of data complexity
for instance checking and conjunctive query entailment for \ELI{}.
Moreover, we sketched some directions for future research,
applying (progressive notions from) parameterized complexity theory
to the study of description logic reasoning problems.

\subsubsection{Acknowledgments.}
This work was supported by the Austrian Science Fund (FWF),
project~J4047.



\DeclareRobustCommand{\DE}[3]{#3}

\bibliographystyle{aaai}

\end{document}